\newcommand{\Prop}{\mathsf{hProp}}
\newcommand{\Set}{\mathsf{hSet}}
\newcommand{\UU}{\mathcal{U}}
\newcommand{\N}{\mathbb N}
\newcommand{\Bool}{\mathbf{2}}
\newcommand{\Unit}{\mathbf{1}}
\newcommand{\Empty}{\mathbf{0}}
\newcommand{\bff}{\mathsf{ff}}
\newcommand{\btt}{\mathsf{tt}}
\newcommand{\dissum}{\mathop{\uplus}}
\newcommand{\inl}{\mathsf{inl}}
\newcommand{\inr}{\mathsf{inr}}
\newcommand{\fstproj}{\mathsf{fst}}
\newcommand{\sndproj}{\mathsf{snd}}
\newcommand{\blank}{\_}
\renewcommand{\iff}{\leftrightarrow}
\newcommand{\defeq}{\vcentcolon\equiv}
\newcommand{\acc}{\mathsf{acc}}
\newcommand{\cnf}{\mathsf{Cnf}}
\newcommand{\brouwer}{\mathsf{Brw}}
\newcommand{\bookord}{\mathsf{Ord}}
\newcommand{\bzero}{\mathsf{zero}}
\newcommand{\bsuc}{\mathsf{succ}}
\newcommand{\blimit}{\mathsf{limit}}
\newcommand{\bbisim}{\mathsf{bisim}}
\newcommand{\lzero}{\mathord\leq\mbox{-}\mathsf{zero}}
\newcommand{\ltrans}{\mathord\leq\mbox{-}\mathsf{trans}}
\newcommand{\lsuccmono}{\mathord\leq\mbox{-}\mathsf{succ}\mbox{-}\mathsf{mono}}
\newcommand{\lcocone}{\mathord\leq\mbox{-}\mathsf{cocone}}
\newcommand{\llimiting}{\mathord\leq\mbox{-}\mathsf{limiting}}
\newcommand{\toincr}[1]{\xrightarrow{#1}}
\newcommand{\isZero}{\mathsf{isZero}}
\newcommand{\Code}{\mathsf{Code}}
\newcommand{\toCode}{\mathsf{toCode}}
\newcommand{\fromCode}{\mathsf{fromCode}}
\newcommand{\osuc}[1]{#1 \dissum \Unit}
\newcommand{\osup}[1]{\lim #1}
\newcommand{\simplesuc}{\mathsf{suc}}
\newcommand{\LEM}{\mathsf{LEM}}
\newcommand{\TT}{\mathcal T}
\newcommand{\fst}{\mathsf{left}}
\newcommand{\fatplus}{\raisebox{0ex}{\tikz\filldraw[black,x=.7pt,y=.7pt] (0,0) -- ++(3,0) -- ++(0,3) -- ++(1,0) -- ++(0,-3) -- ++(3,0) -- ++(0,-1) -- ++(-3,0) -- ++(0,-3) -- ++(-1,0) -- ++(0,3) -- ++(-3,0) -- cycle;}}
\newcommand{\tz}{\mathrm 0 }
\newcommand{\tom}[2]{{\upomega}^{#1} \, \fatplus \, {#2}}
\newcommand{\tone}{1}
\newcommand{\iszero}{\mathsf{is}\mbox{-}\mathsf{zero}}
\newcommand{\isstrongsuc}{\mathsf{is}\mbox{-}\mathsf{str}\mbox{-}\mathsf{suc}}
\newcommand{\islim}{\mathsf{is}\mbox{-}\mathsf{lim}}
\newcommand{\islimit}{\mathsf{is}\mbox{-}\mathsf{limit}}
\newcommand{\issucof}[2]{#1 \; \mathsf{is}\mbox{-}\mathsf{suc}\mbox{-}\mathsf{of} \; #2}
\let\isupsucof\issucof
\newcommand{\isstrongsucof}[2]{#1 \; \mathsf{is}\mbox{-}\mathsf{str}\mbox{-}\mathsf{suc}\mbox{-}\mathsf{of} \; #2}
\newcommand{\issupof}[2]{#1 \; \mathsf{is}\mbox{-}\mathsf{sup}\mbox{-}\mathsf{of} \; #2}
\let\isupsupof\issupof
\newcommand{\islimof}[2]{#1 \; \mathsf{is}\mbox{-}\mathsf{lim}\mbox{-}\mathsf{of} \; #2}
\newcommand{\CtoB}{\mathsf{CtoB}}
\newcommand{\BtoO}{\mathsf{BtoO}}
\newcommand{\formalisedQED}{\faCog}
\newcommand{\formalisedQEDfull}{\faCogs}
\newcommand{\formalisedQEDsymbol}{\qedsymbol{}\formalisedQED}
\newcommand{\formalisedQEDfullsymbol}{\qedsymbol{}\formalisedQEDfull}
\newcommand{\qedhref}[1]{\qed{}\href{#1}{\formalisedQED}}
\newcommand{\qedfullhref}[1]{\qed\href{#1}{\formalisedQEDfull}}
\title{\texorpdfstring{Connecting Constructive Notions of Ordinals \\ in Homotopy Type Theory}{Connecting Constructive Notions of Ordinals in Homotopy Type Theory}}
\titlerunning{Connecting Constructive Notions of Ordinals}
\author{Nicolai Kraus}{University of Nottingham, UK}{nicolai.kraus@nottingham.ac.uk}{https://orcid.org/0000-0002-8729-4077}{The Royal Society, grant reference URF\textbackslash{}R1\textbackslash{}191055.}
\author{Fredrik Nordvall Forsberg}{University of Strathclyde, UK}{fredrik.nordvall-forsberg@strath.ac.uk}{https://orcid.org/0000-0001-6157-9288}{UK National Physical Laboratory Measurement Fellowship project ``Dependent types for trustworthy tools''.}
\author{Chuangjie Xu}{fortiss GmbH, Germany}{xu@fortiss.org}{https://orcid.org/0000-0001-6838-4221}{The Humboldt Foundation and the LMUexcellent program.}
\authorrunning{N.\ Kraus, F.\ Nordvall Forsberg, and C.\ Xu}
\keywords{Constructive ordinals, Cantor normal forms, Brouwer trees}
\begin{document}

\maketitle

\begin{abstract}
	In classical set theory, there are many equivalent ways to introduce
	ordinals. In a constructive setting, however, the different notions
	split apart, with different advantages and disadvantages for
	each. We consider three different notions of ordinals in homotopy
	type theory, and show how they relate to each other: A notation
	system based on Cantor normal forms, a refined notion of Brouwer
	trees (inductively generated by zero, successor and countable
	limits), and wellfounded extensional orders. For Cantor
	normal forms, most properties are decidable, whereas for wellfounded
	extensional transitive orders, most are undecidable. Formulations
	for Brouwer trees are usually partially decidable. We demonstrate
	that all three notions have properties expected of ordinals: their
	order relations, although defined differently in each case, are all
	extensional and wellfounded, and the usual arithmetic
	operations can be defined in each case.  We connect these notions by
	constructing structure preserving embeddings of Cantor normal forms
	into Brouwer trees, and of these in turn into wellfounded
	extensional orders.
	We have formalised most of our results in cubical Agda.
\end{abstract}

\section{Introduction}

The use of ordinals is a powerful tool when proving
that processes terminate, when justifying induction and
recursion \cite{dershowitz:termination,Floyd:1967}, or in
(meta)mathematics generally.
Unfortunately, the standard definition of ordinals is not very
well-behaved constructively, meaning that additional work is required before this tool can be deployed in constructive mathematics or program verification tools based on constructive type theory such as Agda~\cite{norell07thesis}, Coq~\cite{Coq} or Lean~\cite{lean}.
Constructively, the classical notion of ordinal fragments into a number of
inequivalent definitions, each with pros and cons. For example,
``syntactic'' ordinal notation
systems~\cite{buchholz:notation,schuette:book,takeuti:book} are
popular with proof theorists, as their concrete character typically
mean that equality and the order relation on ordinals are
decidable. However, truly infinitary operations such as taking the
limit of a countable sequence of ordinals are usually not
constructible. We will consider a simple ordinal notation system based
on Cantor normal forms~\cite{NFXG:three:ord}, designed in such a way
that there are no ``junk'' terms not denoting real ordinals.

Another alternative (based on notation systems by Church~\cite{church:1938} and
Kleene~\cite{kleene:notation-systems}), popular in the functional
programming community, is to consider ``Brouwer ordinal trees''
$\mathcal{O}$ inductively generated by zero, successor and a
``supremum'' constructor
\[
  \mathsf{sup} : (\N \to \mathcal{O}) \to \mathcal{O}
\]
which forms a new tree for every countable sequence of
trees~\cite{brouwer:trees,coquand:ord-in-tt,hancock:thesis}. By the
inductive nature of the definition, constructions on trees can be
carried out by giving one case for zero, one for successors, and one
for suprema, just as in the classical theorem of transfinite
induction. However calling the constructor $\mathsf{sup}$ is wishful
thinking; $\mathsf{sup}(s)$ does not faithfully represent the suprema
of the sequence $s$, since we do not have that e.g.\
$\mathsf{sup}(s_0, s_1, s_2, \ldots) = \mathsf{sup}(s_1, s_0, s_2,
\ldots)$ --- each sequence gives rise to a new tree, rather than
identifying trees representing the same suprema. We use the notion of
higher inductive types~\cite{cubicalhits,lumsdaine:hits} from homotopy
type theory~\cite{hott-book} to remedy the situation and make a type
of Brouwer trees which faithfully represents ordinals. Since our
ordinals now can be infinitary, we lose decidability of equality and order
relations, but we retain the possibility of classifying an ordinal as
a zero, a successor or a limit.

One can also consider extensional wellfounded orders, a variation on the classical set-theoretical
axioms more suitable for a constructive
treatment~\cite{taylor:ordinals}, which was transferred to the setting of
homotopy type theory in the HoTT book~\cite[Chapter 10]{hott-book},
and significantly extended by
Escard\'{o}~\cite{escardo:agda-ordinals}. One is then forced to give
up most notions of decidability --- it is not even possible to decide if a
given ordinal is zero, a successor or a limit. However many operations
can still be defined on such ordinals, and properties such as
wellfoundedness can still be proven. This is also the notion of
ordinal most closely related to the traditional notion, and thus the
most obviously ``correct'' notion in a classical setting.

All in all, each of these approaches gives quite a different feel to
the ordinals they represent: Cantor normal forms emphasise syntactic
manipulations, Brouwer trees how every ordinal can be classified as a
zero, successor or limit, and extensional wellfounded orders the set
theoretic properties of ordinals. As a consequence, each notion of
ordinals is typically used in isolation, with no interaction or
opportunities to transfer constructions and ideas from one setting to
another --- e.g., do the arithmetic operations defined on Cantor
normal forms obey the same rules as the arithmetic operations
defined on Brouwer trees? The goal of this paper is to answer such
questions by connecting together the different notions. We
do this firstly by introducing an abstract axiomatic framework of what
we expect of any notion of ordinal, and explore to what extent the
notions above satisfy these axioms, and secondly by constructing
faithful embeddings between the notions, which shows that they all
represent a correct notion of ordinal from the point of view of
classical set theory.

\paragraph*{Contributions}
\begin{itemize}
\item We identify an axiomatic framework for ordinals and ordinal arithmetic
  that we use to compare the situations above in the setting of homotopy type theory. %(\cref{sec:axiomaticapproach}).
\item We define arithmetic operations on Cantor normal
  forms~\cite{NFXG:three:ord} and prove them uniquely
  correct with respect to our abstract axiomatisation. This notion of correctness has not been verified for Cantor normal forms previously, as far as we know.
  %(\cref{subsec:CNF}).
\item We construct a higher inductive-inductive type of Brouwer trees,
  and prove that their order is both wellfounded and extensional ---
  properties which do not hold simultaneously for previous definitions
  of ordinals based on Brouwer trees. %(as far as we are aware).
  Further, we define arithmetic operations, and show that
  they are uniquely correct.
  %(\cref{subsec:Brouwer-as-HIIT}).
\item We prove that the ``set-theoretic'' notion of
  ordinals~\cite[Section 10.3]{hott-book} satisfies our axiomatisation
  of addition and multiplication, and give constructive ``taboos'',
  showing that many operations on these ordinals are not possible
  constructively.  %(\cref{subsec:bookords,sec:bookords}).
\item We relate and connect these different notions of ordinals by
  constructing order preserving embeddings from more decidable notions
  into less decidable ones. %(\cref{sec:interpretations}).
\end{itemize}

\paragraph*{Formalisation and Full Proofs}

We have formalised the material on
Cantor normal forms and Brouwer trees
in cubical Agda~\cite{VMA:cubical:agda} at \url{https://cj-xu.github.io/agda/constructive-ordinals-in-hott/}; see also Escard\'{o}'s
formalisation~\cite{escardo:agda-ordinals} of many results on
``set-theoretic'' ordinals in HoTT.
We have marked theorems with formalised and partly formalised proofs using the QED symbols \formalisedQEDfullsymbol{} and \formalisedQEDsymbol{} respectively; they are also clickable links to the corresponding machine-checked statement.
Moreover, pen-and-paper proofs for all our results can be found in the
\ifthenelse{\boolean{arxiv}}{\hyperref[sec:appendix]{appendix}.}{\href{https://arxiv.org/abs/2104.02549}{arXiv version} of the paper.}

Our formalisation uses the \texttt{\{-\# TERMINATING \#-\}} pragma to
work around one known bug
(\href{https://github.com/agda/agda/issues/4725}{issue \#4725}) and
one limitation of the termination checker of Agda: recursive calls
hidden under a propositional truncation are not seen to be
structurally smaller.  Such recursive calls when proving a proposition
are justified by the eliminator presentation of \cite{gabe:phd}
(although it would be non-trivial to reduce our mutual definitions to
eliminators).

\section{Underlying Theory and Notation}

We work in and assume basic familarity with homotopy type theory
(HoTT), i.e.\ Martin-L\"of type theory extended with higher inductive
types and the univalence axiom~\cite{hott-book}. The central concept
of HoTT is the Martin-L\"of identity type, which we write as $a = b$
--- we write $a \equiv b$ for definitional equality. We use Agda
notation $(x : A) \to B(x)$ for the type of dependent functions,
and write simply $A \to B$ if $B$ does not depend on $x : A$.
If the type
in the domain can be inferred from context, we may simply write
$\forall x. B(x)$ for $(x : A) \to B(x)$.
Freely occurring variables are assumed to be $\forall$-quantified.

We denote the type of dependent pairs by $\Sigma(x : A).B(x)$, and its
projections by $\fstproj$ and $\sndproj$. We write $A\times B$ if $B$ does not depend on $x:A$.
We write $\UU$ for a universe of types; we assume that we have a
cumulative hierarchy $\UU_i : \UU_{i+1}$ of such universes closed
under all type formers, but we will leave universe levels typically
ambiguous.

We call a type $A$ a \emph{proposition} if all elements of $A$ are
equal, i.e.\ if $(x : A) \to (y : A) \to x = y$ is provable. We write
$\Prop = \Sigma(A : \UU).\mathsf{isProp}(A)$ for the type of propositions, and
we implicitly insert a first projection if necessary, e.g.\ for
$A : \Prop$, we may write $x : A$ rather than $x : \fstproj(A)$.
A type $A$ is a \emph{set}, $A : \Set$, if $(x = y) : \Prop$ for every $x, y : A$.

By $\exists(x : A).B(x)$, we mean the \emph{propositional truncation}
of $\Sigma(x : A).B(x)$, 
%i.e.\ $\exists(x : A).B(x) : \Prop$, 
and if
$(a, b) : \Sigma(x : A).B(x)$ then $|(a, b)| : \exists(x : A).B(x)$.
The elimination rule of $\exists(x : A).B(x)$ only allows to define
functions into propositions. By convention, we write $\exists k. P(k)$
for $\exists(k : \N).P(k)$.
Finally, we write $A \dissum B$ for the sum type, $\Empty$ for the empty type, $\Unit$ for the type
with exactly one element $\ast$, $\Bool$ for the type with two
elements $\bff$ and $\btt$,  and $\neg A$ for $A \to \Empty$.

The \emph{law of excluded middle} $(\LEM)$ says that, for every proposition $P$, we have $P \dissum \neg P$.
Since we explicitly work with constructive notions of ordinals, we do not assume $\LEM$, but rather use it as a \emph{taboo}: a statement is not provable constructively if it implies $\LEM$.
Another, weaker, constructive taboo is the
\emph{weak limited principle of omniscience} WLPO:
It says that any sequence $s : \N \to \Bool$ is either constantly $\mathsf{ff}$, 
or it is not constantly $\mathsf{ff}$. 

\section{Three Constructions of Types of Ordinals}
\label{sec:three-constructions-overview}

We consider three concrete notions of ordinals in this paper, together with their order relations $<$ and $\leq$.
The first notion is the one of \emph{Cantor normal forms}, written $\cnf$, whose order is decidable.
The second, written $\brouwer$, are \emph{Brouwer Trees}, implemented as a higher inductive-inductive type.
Finally, we consider the type $\bookord$ of ordinals that were studied in the HoTT book~\cite{hott-book}, whose order is undecidable, in general.
In the current section, we briefly give the three definitions and leave the discussion of results for afterwards.

\subsection{Cantor Normal Forms as a Subset of Binary Trees}
\label{subsec:CNF}

In classical set theory,
 every ordinal $\alpha$ can be written uniquely in Cantor normal form
\begin{equation}
  \label{eq:cnf-set-theory}
\alpha = \omega^{\beta_1} + \omega^{\beta_2} + \cdots + \omega^{\beta_n}
\;
\text{with }
\beta_1 \geq \beta_2 \geq \cdots \geq \beta_n
\end{equation}
for some natural number $n$ and ordinals $\beta_i$. If $\alpha < \varepsilon_0$,
then $\beta_i < \alpha$, and we can represent $\alpha$ as a finite binary tree (with a condition) as follows~\cite{buchholz:notation,CC:ord:coq,grimm:ord:coq,NFXG:three:ord}. Let $\TT$ be the type of unlabeled binary trees, i.e.\ the inductive type with suggestively named constructors $\tz : \TT$ and $\tom{-}{-} : \TT \times \TT \to \TT$.
Let the relation $<$ be the \emph{lexicographical order}, i.e.\ generated by the following clauses:
 \[
 \begin{aligned}
   & \tz < \tom a b \\
   & a < c \to \tom a b < \tom c d \\
   & b < d \to \tom a b < \tom a d.
  \end{aligned}
\]
We have the map $\fst:\TT \to \TT$ defined by $\fst(\tz) \defeq \tz$ and $\fst(\tom{a}{b}) \defeq a$ which gives us the left subtree (if it exists) of a tree.
A tree is a \emph{Cantor normal form} (CNF) if, for every $\tom{s}{t}$ that the tree contains, we have $\fst(t) \leq s$, where $s \leq t \defeq (s < t) \dissum (s = t)$; this enforces the condition in \eqref{eq:cnf-set-theory}. For instance, both trees $\tone \defeq \tom{\tz}{\tz}$ and $\omega \defeq \tom{\tone}{\tz}$ are CNFs.
Formally, the predicate $\mathsf{isCNF}$ is defined inductively by the two clauses
\[
 \begin{aligned}
  & \mathsf{isCNF}(\tz) \\
  & \mathsf{isCNF}(s) \to \mathsf{isCNF}(t) \to \fst(t) \leq s \to \mathsf{isCNF}(\tom{s}{t}).
 \end{aligned}
\]
We write
 $\cnf \defeq \Sigma(t : \TT). \mathsf{isCNF}(t)$
for the type of Cantor normal forms.
We often omit the proof of $\mathsf{isCNF}(t)$ and call the tree $t$ a CNF if no confusion is caused.

\subsection{Brouwer Trees as a Quotient Inductive-Inductive Type}
\label{subsec:Brouwer-as-HIIT}

As discussed in the introduction, \emph{Brouwer ordinal trees} (or simply \emph{Brouwer trees}) are in functional programming often inductively generated by the usual constructors of natural numbers (\emph{zero} and \emph{successor}) and a constructor that gives a Brouwer tree for every sequence of Brouwer trees.
To state a refined (\emph{correct} in a sense that we will make precise and prove) version, we need the following notions:

Let $A$ be a type and $\prec \, : A \to A \to \Prop$ be a binary relation.
If $f$ and $g$ are two sequences $\N \to A$, we say that $f$ is \emph{simulated by} $g$, written $f \precsim g$, if
$f \precsim g \defeq \forall k. \exists n. f(k) \prec g(n)$.
We say that $f$ and $g$ are \emph{bisimilar} with respect to $\prec$, written
\mbox{$f \approx^{\prec} g$},
if we have both $f \precsim g$ and $g \precsim f$.
A sequence $f : \N \to A$ is \emph{increasing} with respect to $\prec$
%, written $\isincr^\prec(f)$,
 if we have \mbox{$\forall k. f(k) \prec f(k+1)$}.
We write $\N \toincr{\prec} A$ for the type of $\prec$-increasing sequences. Thus an increasing sequence $f$ is a pair $f \equiv (\overline f, p)$ with $p$ witnessing that $\overline f$ is increasing, but we keep the first projection implicit and write $f(k)$ instead of $\overline f(k)$.

Our type of Brouwer trees is a \emph{quotient inductive-inductive type} \cite{Altenkirch2018},
where we  simultaneously construct the type $\brouwer : \Set$ together with a relation $\leq \,  : \brouwer \to \brouwer \to \Prop$.
The constructors for $\brouwer$ are
\[
\begin{alignedat}{3}
& \bzero &:&\;  &&\brouwer \\
& \bsuc &:& && \brouwer \to \brouwer \\
& \blimit &:& && (\N \toincr{<} \brouwer) \to \brouwer \\
& \bbisim & : & && (f \, g : \N \toincr{<} \brouwer) \to  f \approx^{\leq} g \to \blimit \, f = \blimit \, g,
\end{alignedat}
\]
where we write $x < y$ for $\bsuc \, x \leq y$ in the type of $\blimit$. Simulations thus use $\leq$ and the \emph{increasing} predicate uses $<$, as one would expect.
The truncation constructor, ensuring that $\brouwer$ is a set, is kept implicit in the paper (but is explicit in the Agda formalisation).
%NOTE: we could also use $<$ for the weak bisimilarity, but it would be more work.
%%% If we want to use the same alignat environment:
%\intertext{The constructors for $(\blank \leq \blank)$ are:}

The constructors for $\leq$ are the following, where each constructor is implicitly quantified over the variables $x, y, z : \brouwer$ and $f : \N \toincr{<} \brouwer$ that it contains:
\[
\begin{alignedat}{3}
&\lzero &:& &\quad& 
    %\{x : \brouwer\} \to 
    \bzero \leq x \\
&\ltrans &:& && 
    %\{x \, y \, z : \brouwer\} \to 
    x \leq y \to y \leq z \to x \leq z\\
&\lsuccmono &:& && 
    %\{x \, y : \brouwer\} \to 
    x \leq y \to \bsuc \, x \leq \bsuc \, y \\
&\lcocone &:& && 
    %\{x : \brouwer\} \to 
    %(f : \N \toincr{<} \brouwer) \to 
    (k : \N) \to x \leq f(k) \to x \leq \blimit \, f \\
&\llimiting &:& && 
    %\{x : \brouwer\} \to 
    %(f : \N \toincr{<} \brouwer) \to 
    (\forall k. f(k) \leq x) \to \blimit \, f \leq x
\end{alignedat}
\]
The truncation constructor, which ensures that $x \leq y$ is a proposition, is again kept implicit.

We hope that the constructors of $\brouwer$ and $\leq$ are self-explanatory.
$\lcocone$ ensures that $\blimit \, f$ is indeed an upper bound of $f$, and $\llimiting$ witnesses that it is the \emph{least} upper bound or, from a categorical point of view, the (co)limit of $f$.
%The constructors $\bzero$ and $\bsuc$ are self-explanatory.

By restricting to limits of increasing sequences, we can avoid multiple representations of the same ordinal (as otherwise e.g.\ $a = \blimit \, (\lambda \_ . a)$). It is possible to drop this restriction, if one also strengthens the $\bbisim$ constructor to witness antisymmetry --- however we found this version of $\brouwer$ significantly harder to work with.

\subsection{Extensional Wellfounded Orders} \label{subsec:bookords}

The third notion of ordinals that we consider is the one studied in the HoTT book~\cite{hott-book}.
This is the notion which is closest to the classical definition of an ordinal as a set with a trichotomous, wellfounded, and transitive order, % TODO should we add a citation?
without a concrete representation.
Requiring trichotomy leads to a notion that makes many constructions impossible in a setting where the law of excluded middle is not assumed.
Therefore, when working constructively, it is better to replace the axiom of trichotomy by \emph{extensionality}.

Concretely, an ordinal in the sense of \cite[Def 10.3.17]{hott-book} is a type%
\footnote{Note that \cite[Def 10.3.17]{hott-book} asks for $X$ to be a set, but this follows from the rest of the definition and we therefore drop this requirement.}
$X$ together with a relation $\prec \, : X \to X \to \Prop$ which is \emph{transitive}, \emph{extensional} (any two elements of $X$ with the same predecessors are equal), and \emph{wellfounded} (every element is accessible, where accessibility is the least relation such that $x$ is accessible if every predecessor of $x$ is accessible.) --- we will recall the precise definitions in \cref{sec:axiomaticapproach}.
We write $\bookord$ for the type
of ordinals in this sense.
Note the shift of universes that happens here: the type $\bookord_i$ of ordinals with $X : \UU_i$ is itself in $\UU_{i+1}$. We are mostly interested in $\bookord_0$, but note that $\bookord_0$ lives in $\UU_1$, while $\cnf$ and $\brouwer$ both live in $\UU_0$.

We also have a relation on $\bookord$ itself.
Following \cite[Def 10.3.11 and Cor 10.3.13]{hott-book}, a \emph{simulation} between ordinals $(X,\prec_X)$ and $(Y,\prec_Y)$ is a function $f : X \to Y$ such that:
\begin{enumerate}[(a)]
    \item $f$ is monotone: $(x_1 \prec_X x_2) \to (f\, x_1 \prec_Y f\, x_2)$; and
    \item for all $x : X$ and $y : Y$, if $y \prec_Y f\, x$, then we have an $x_0 \prec_X x$ such that $f\, x_0 = y$. \label{item:simu-property}
    \end{enumerate}
We write $X \leq Y$ for the type of simulations between $(X,\prec_X)$ and $(Y,\prec_Y)$.
Given an ordinal $(X,\prec)$ and $x:X$, the \emph{initial segment} of elements below $x$ is given as
$X_{\slash x} \defeq \Sigma(y : X). y \prec x$.
Following \cite[Def 10.3.19]{hott-book},
a simulation
$f : X \leq Y$ is \emph{bounded} if we have $y : Y$ such that 
$f$ induces an equivalence $X \simeq Y_{\slash y}$.
We write $X < Y$ for the type of bounded simulations.
This completes the definition of $\bookord$ together with type families $\leq$ and $<$.

%Note that $y$ is unique (if it exists), i.e.\ there is no difference between ``there is'' and ``exists''.
%This type is a proposition.

\section{An Abstract Axiomatic Framework for Ordinals} \label{sec:axiomaticapproach}

Which properties do we expect a type of ordinals to have?
In this section, we go up one level of abstraction.
We consider a type $A$ with 
type families $<$ and $\leq \, : A \to A \to \UU$, and discuss the properties that $A$ with $<$ and $\leq$ can have.
In \cref{sec:three-constructions-overview}, we introduced each of the types $\cnf$, $\brouwer$, and $\bookord$ together with its relations $<$ and $\leq$.
Note that $\leq$ is the reflexive closure of $<$ in the case of $\cnf$, but for $\brouwer$ and $\bookord$, this is not constructively provable.
In this section, we consider which properties they satisfy.

\subsection{General Notions}

$A$ is a \emph{set} if it satisfies the principle of unique identity proofs, i.e.\ if every identity type $a = b$ with $a, b: A$ is a proposition.
Similarly, $<$ and $\leq$ are \emph{valued in propositions} if every $a < b$ and $a \leq b$ is a proposition.
%
%All the following notions are $\forall$-quantified over all occurring variables:
A relation $<$ is \emph{reflexive} if we have $a < a$, \emph{irreflexive} if it is pointwise not reflexive $\neg (a < a)$,
\emph{transitive} if $a < b \to b < c \to a < c$, and \emph{antisymmetric} if $a < b \to b < a \to a = b$.
Further, the relation $<$ is \emph{connex} if $(a < b) \dissum (b < a)$ and \emph{trichotomous} if $(a < b) \dissum (a = b) \dissum (b < a)$.

\begin{theorem} \label{thm:general-notions}
  Each of $\cnf$, $\brouwer$, and $\bookord$ is a set, and their relations $<$ and $\leq$ are all valued in propositions.
    In each case, both $<$ and $\leq$ are transitive, $<$ is irreflexive, and $\leq$ is reflexive and antisymmetric.
    For $\cnf$, the relation $<$ is trichotomous and $\leq$ connex; for $\bookord$, these statements are equivalent to the law of excluded middle.
\qedhref{https://cj-xu.github.io/agda/constructive-ordinals-in-hott/index.html\#1521}
  \end{theorem}

Proving that $\leq$ for $\brouwer$ is antisymmetric is challenging because of the path constructors in the inductive-inductive definition of Brouwer trees.
Antisymmetry and other technical properties discussed below require us to characterise the relation $\leq$ more explicitly, using an encode-decode argument~\cite{licataShulman_circle}.
By induction on $x$ and $y$, we define the family $\Code$ such that $(\Code \, x \, y) \leftrightarrow (x \leq y)$.
The cases for point constructors are unsurprising; for example, we define
\begin{align*}
  \Code \, (\blimit \, f) \,  (\bsuc \, y) &\defeq  \forall k. \Code \, (f \, k) \, (\bsuc \, y) \\
    \Code \, (\blimit \, f) \, (\blimit \, g) &\defeq
  \forall k. \exists n.  \Code \, (f \, k) \, (g \, n) \enspace .
\end{align*}
The difficult part is defining $\Code$ for the path constructor $\bbisim$.
If for example we have $g \approx h$, we need to show that $\Code \; (\blimit \, f) \, (\blimit \, g) \; = \; \Code \; (\blimit \, f) \, (\blimit \, h)$.
The core argument is easy; using the bisimulation $g \approx h$, one can translate between indices for $g$ and $h$ with the appropriate properties.
However, this example already shows why this becomes tricky: The bisimulation gives us inequalities ($\leq$), but the translation requires instances of $\Code$, which means that $\toCode : (x \leq y) \to (\Code \, x \, y)$ has to be defined \emph{mutually} with $\Code$.
This is still not sufficient: In total, the mutual higher inductive-inductive construction needs to simultaneously prove and construct $\Code$, $\toCode$, versions of transitivity and reflexivity of $\Code$ as well several auxiliary lemmas.
The complete definition is presented in the Agda formalisation (file \href{https://cj-xu.github.io/agda/constructive-ordinals-in-hott/BrouwerTree.Code.html}{\texttt{BrouwerTree.Code}}).
Once the definition of $\Code$ is shown correct, many technical properties are simple consequences.

%Of course, the difficulty of antisymmetry in one case is a consequence of our chosen definition for $\brouwer$, and other definitions would make antisymmetry easy; but unsurprisingly, such alternative definitions simply shift the difficulties to other places, see \cref{subsec:alt-def-of-Brouwer}.

From now on, we will assume that $A$ is a set and that $<$ and $\leq$ are valued in propositions.

%%% don't need this:
%\begin{definition}[reflexive closure] \label{def:refl-closure}
%    We write $\leqq$ for the \emph{reflexive closure} of $<$, given by $b \leqq a \equiv b < a + b = a$.
%\end{definition}

\subsection{Extensionality and Wellfoundedness}

%\begin{definition}[{extensionality \cite[Def~10.3.9]{hott-book}}]
%    The relation $<$ is \emph{extensional} if, for all $a, b : A$, we have
%    \begin{equation}
%    (\forall c. c < a \leftrightarrow c < b) \to b = a.
%    \end{equation}
%\end{definition}

Following~\cite[Def~10.3.9]{hott-book}, we call a relation $<$ \emph{extensional} if, for all $a, b : A$, we have
$(\forall c. c < a \leftrightarrow c < b) \to b = a$,
where $\leftrightarrow$ denotes ``if and only if'' (functions in both directions). Extensionality of $<$ for $\brouwer$ is true, but non-trivial -- note that it fails for the ``naive'' version of $\brouwer$, where the path constructor $\bbisim$ is missing.

\begin{theorem} \label{thm:<-extensional}
  For each of $\cnf$, $\brouwer$, $\bookord$, both $<$ and $\leq$ are
  extensional.
  \qedhref{https://cj-xu.github.io/agda/constructive-ordinals-in-hott/index.html\#2511}
\end{theorem}

We use the inductive definition of accessibility and wellfoundedness (with respect to $<$) by Aczel~\cite{aczelinductive}.
Concretely, the type family
$\acc : A \to \UU$ is inductively defined by the constructor
\[
\mathit{access} : (a : A) \to ((b : A) \to b < a \to \acc(b)) \to \acc(a).
\]
An element $a : A$ is called \emph{accessible} if $\acc(a)$, and $<$ is \emph{wellfounded} if all elements of $A$ are accessible. It is well known that the following induction principle can be derived from the inductive presentation~\cite{hott-book}:

\begin{lemma}[Transfinite Induction] \label{lm:wf:ti}
    Let $<$ be wellfounded and $P : A \to \UU$ be a type family such that
    $\forall a. (\forall b<a.P(b)) \to P(a)$.
    Then, it follows that $\forall a. P(a)$.
    \qedfullhref{https://cj-xu.github.io/agda/constructive-ordinals-in-hott/index.html\#2828}
\end{lemma}

In turn, transfinite induction can be used to prove that there is no infinite decreasing sequence if $<$ is wellfounded:
  $\neg \left( \Sigma (f : \N \to A). (i : \N) \to f(i+1) < f(i) \right)$.
A direct corollary is that if $<$ is wellfounded and valued in propositions, then its reflexive closure $(x < y) \dissum (x = y)$ is also valued in propositions, as $b < a$ and $b = a$ are mutually exclusive propositions.

%In particular, if we have a wellfounded relation on a set that is valued in propositions, its reflexive closure is still valued in propositions.

\begin{theorem} \label{thm:<-wellfounded}
  For each of $\cnf$, $\brouwer$, $\bookord$, the relation $<$ is wellfounded.
  \qedhref{https://cj-xu.github.io/agda/constructive-ordinals-in-hott/index.html\#2993}
\end{theorem}

The proof for $\brouwer$ again makes crucial use of our encode-decode characterisation of $\leq$. Whenever $x < \blimit\,f$, we can use the characterisation to find an $n : \N$ such that $x < f(n)$, which allows us to proceed with an inductive proof of wellfoundedness.
Note that the results stated so far in particular mean that $\cnf$ and $\brouwer$ can be seen as elements of $\bookord$ themselves.

%%% split into multiple subsections:
%\subsection{Numbers and Arithmetic} \label{subsec:suprema}

\subsection{Classification as Zero, a Successor, or a Limit}

All standard formulations of ordinals allow us to determine a minimal ordinal \emph{zero} and (constructively) calculate the \emph{successor} of an ordinal, but only some allows us to also calculate the \emph{supremum} or \emph{limit} of a collection of ordinals.

\subsubsection{Assumptions}
\label{subsubsec:abstract-assumption}
We have so far not required a relationship between $<$ and $\leq$, but we now need to do so in order for the concepts we define to be meaningful.
We assume:
\begin{enumerate}[({A}1)]
    \item \label{item:assumption1}
    $<$ is transitive and irreflexive;
    \item \label{item:assumption2}
    $\leq$ is reflexive, transitive, and antisymmetric;
    \item \label{item:assumption3}
    we have $(<) \subseteq (\leq)$ and $(< \circ \leq) \subseteq (<)$.
\end{enumerate}
The third condition \ref{item:assumption3} means that $(b < a) \to (b \leq a)$ and $(c < b) \to (b \leq a) \to (c < a)$.
The ``symmetric'' variation
\[
(\leq \circ <) \subseteq (<)
\]
is true for $\cnf$ and $\brouwer$, but for $\bookord$, it is
equivalent to the law of excluded middle --- hence, we do not assume
it. This constructive failure is known, and can be seen as motivation
for \emph{plump} ordinals~\cite{taylor:ordinals,shulman:plump}.
Of course, the above assumptions are satisfied if $\leq$ is the reflexive closure of $<$, but we again emphasise that this is not necessarily the case.

\begin{theorem} \label{thm:all-satisfy-assumptions}
    For each of $\cnf$, $\brouwer$, $\bookord$, assumptions \ref{item:assumption1} to \ref{item:assumption3} are satisfied.
    \qedhref{https://cj-xu.github.io/agda/constructive-ordinals-in-hott/index.html\#3153}
\end{theorem}

For the remaining concepts, we assume that $<$ and $\leq$ satisfy the discussed assumptions.

\subsubsection{Zero and (Strong) Successors}

Let $a$ be an element of $A$.
It is \emph{zero}, or \emph{bottom}, if it is at least as small as any other element
\begin{equation}
  \label{eq:iszero}
\iszero(a) \defeq \forall b. a \leq b,
\end{equation}
and we say that the triple $(A,<,\leq)$ \emph{has a zero} if we have an inhabitant of the type $\Sigma(z : A). \iszero(z)$. Both the types ``being a zero'' and ``having a zero'' are propositions.

\begin{theorem} \label{thm:all-zero}
  $\cnf$, $\brouwer$, $\bookord$ each have a zero.
  \qedhref{https://cj-xu.github.io/agda/constructive-ordinals-in-hott/index.html\#3561}
\end{theorem}

We say that $a$ is a \emph{successor} of $b$ if it is the least element strictly greater\footnote{Note that $>$ and $\geq$ are the obvious symmetric notations for $<$, $\leq$; they are \emph{not} newly assumed relations.} than $b$:
\[
  (\isupsucof a b) \defeq (b < a) \times \forall x > b. x \geq a.
\]
We say that $(A,<,\leq)$ \emph{has successors} if
there is a function $s : A \to A$ which calculates successors, i.e.\ such that $\forall b. \issucof{s(b)}{b}$. ``Calculating successors'' and ``having successors'' are propositional properties, i.e.\ if a function that calculates successors exists, then it is unique. The following statement is simple but useful. Its proof uses assumption \ref{item:assumption3}.
\begin{lemma}
  \label{thm:calc-succ-characterisation}
    Let $s : A \to A$ be given. The function
    $s$ calculates successors if and only if $\forall b x. (b < x) \leftrightarrow (s\, b \leq x)$.
    \qedfullhref{https://cj-xu.github.io/agda/constructive-ordinals-in-hott/index.html\#4080}
\end{lemma}

Dual to ``$a$ is the least element strictly greater than $b$'' is the 
statement that ``$b$ is the greatest element strictly below $a$'', in which case it is natural to call $b$ the \emph{predecessor} of $a$.
If $a$ is the successor of $b$ and $b$ the predecessor of $a$, then we call $a$ the \emph{strong successor} of $b$:
\[
\isstrongsucof{a}{b} \defeq \issucof{a}{b} \times \forall x < a. x \leq b.
\]
We say that $A$ \emph{has strong successors} if there is $s : A \to A$ which calculates strong successors, i.e.\ such that $\forall b. \isstrongsucof{s(b)}{b}$.
The additional information contained in a strong successor play an important role in our technical development. %TODO: How?
A function $f : A \to A$ is \emph{$<$-monotone} or \emph{$\leq$-monotone} if it preserves the respective relation.

\begin{theorem}\label{thm:strong-succs}
  Each of the three types $\cnf$, $\brouwer$, $\bookord$ has strong successors.
  The successor functions of $\cnf$ and $\brouwer$ are both $<$- and $\leq$-monotone.
  For the successor function of $\bookord$, either monotonicity property is equivalent to the law of excluded middle.
  \qedhref{https://cj-xu.github.io/agda/constructive-ordinals-in-hott/index.html\#4303}
\end{theorem}

For $\cnf$, the successor function is given by adding a leaf, for $\brouwer$ by the constructor with the same name, and for $\bookord$, one forms the coproduct with the unit type.
%
%None of the notions of ordinals has ``general'' predecessors in the above sense, of course; for example, there is no greatest ordinal strictly below $\omega$.
%Nevertheless, predecessors play a role in the technical development in our formalisation.

\subsubsection{Suprema and Limits}

Finally, we consider \emph{suprema/least upper bounds} of $\N$-indexed sequences.
%\footnote{Many of our definitions make sense if $\N$ is replaced by an arbitrary type or partial order $X$.}
%
We say that $a$ is a \emph{supremum} or the \emph{least upper bound} of $f : \N \to A$, if $a$ is at least as large as every $f_i$, and if any other $x$ with this property is at least as large as $a$:
\[
(\isupsupof a f) \defeq (\forall i. f_i \leq a) \times (\forall x. (\forall i. f_i \leq x) \to a \leq x). 
\]
We say that $(A,<,\leq)$ \emph{has suprema} if there is a function $\sqcup : (\N \to A) \to A$ which calculates suprema, i.e.\ such that $(f : \N \to A) \to \issupof{(\sqcup f)}{f}$.  The supremum of a sequence is unique if it exists, i.e.\ the type of suprema is propositional for a given sequence $f$.
Both the properties ``calculating suprema'' and ``having suprema'' are propositions.

Every $a : A$ is trivially the supremum of the sequence constantly $a$, and therefore, ``being a supremum'' does not describe the usual notion of \emph{limit ordinals}.
One might consider $a$ a \emph{proper} supremum of $f$ if $a$ is pointwise strictly above $f$, i.e.\ $\forall i. f_i < a$.
This is automatically guaranteed if $f$ is
increasing with respect to $<$, %(cf.\ \cref{subsec:Brouwer-as-HIIT}),
and in this case, we call $a$ the \emph{limit} of $f$:
\[
\begin{aligned}
& \islimof \blank \blank : A \to (\N \toincr{<} A) \to \UU \\
& \islimof{a}{(f,q)} \defeq \issupof{a}{f}.
\end{aligned}
\]
We say that $A$ \emph{has limits} if there is a function $\mathsf{limit} : (\N \toincr{<} A) \to A$ that calculates limits.

Note that $\cnf$ cannot have limits since one can construct a sequence (see \cref{thm:cnf-below-eps0}) which comes arbitrarily close to $\varepsilon_0$.
This motivates the restriction to \emph{bounded} sequences, i.e.\ a sequence $f$ with a $b : \cnf$ such that $f_i < b$ for all $i$.

\begin{theorem} \label{thm:sups-lims}
  $\cnf$ does not have suprema or limits.
  $\brouwer$ has limits of increasing sequences by construction.
  $\bookord$ also has limits of increasing sequences, and moreover limits of \emph{weakly} increasing sequences (i.e.\ sequences increasing with respect to $\leq$).
  
  Assuming the law of exclude middle, $\cnf$ has suprema (and thus limits) of arbitrary \emph{bounded} sequences. If $\cnf$ has limits of bounded increasing sequences, then the weak limited principle of omniscience (WLPO) is derivable.
  \qedhref{https://cj-xu.github.io/agda/constructive-ordinals-in-hott/index.html\#4750}
\end{theorem}
%The one-point compactification of $\cnf$ (i.e.\ $\cnf \uplus \mathbf 1$, with $\inr(*)$ being maximal) has suprema and limits classically but not constructively.\todo{iff WLPO holds?}
We expect that it is not constructively possible to calculate suprema (or even binary joins) in $\brouwer$, as it seems this would make it possible to decide if a limit reaches past $\omega + 1$ or not, which is a constructive taboo.
%%% this uses terminology from the next section, thus it's commented out.
%By \cref{thm:classification} below, being able to calculate binary joins would allow us to decide whether the join of $\bsuc \, x$ and $\blimit \, f$ is a successor or a limit, which seems implausible.

% TODO we can say something like this, but I don't thin it's very interesting or important:
%
%If we extend the relation $\leq$ to the function space $\N \to A$ by defining $f \leq g$ to mean $\forall i. f_i \leq g_i$, we can show that the suprema functions of $\brouwer$ and $\bookord$ are $\leq$-monotone.
%\emph{$<$-monotonicity} is not a sensible notion for suprema.

% TODO same for this:
%
%\begin{remark}
%    Analogously to \cref{rem:lower-succ-def},
%    an alternative definition of the statement ``$a$ is a supremum of $f$'' can be given as:
%    \begin{equation}
%    (\forall i. f_i \leq a) \wedge \forall a' < a. \| \Sigma i : \N. a' < f_i \|. 
%    \end{equation}
%    \todo[inline]{Question: In the above equation, should it be $a' < f_i$ or $a' \leq f_i$? 
%    
%    Problem of $a' \leq f_i$: the function $\lambda x.0$ has $1$ as supremum ($0$ also is another supremum).
%
%    problem of $a' < f_i$: I (N.) think I had an example in which it didn't work, but I forgot\ldots
%    }
%\end{remark}

\subsubsection{Classifiability}

For classical set-theoretic ordinals, every ordinal is either zero, a successor, or a limit.
We say that a notion of ordinals which allows this is has classification.
This is very useful, as many theorems that start with ``for every ordinal'' have proofs that consider the three cases separately.
In the same way as not all definitions of ordinals make it possible to calculate limits, only some formulations make it possible to constructively classify any given ordinal.
We already defined what it means to be a zero in \eqref{eq:iszero}.
We now also define what it means for $a : A$ to be a strong successor or a limit:
\[
  \isstrongsuc(a) \defeq \Sigma(b:A).  (\isstrongsucof a b)
  \qquad\quad
  \islim(a) \defeq \exists f : \N \to A. \islimof{a}{f}.
\]
All of $\iszero(a)$, $\isstrongsuc(a)$ and $\islim(a)$ are propositions; note that this is true even though $\isstrongsuc(a)$ is defined without a propositional truncation.

\begin{lemma}
  \label{lem:only-one-out-of-three}
    Any $a:A$ can be at most one out of \{zero, strong successor, limit\}, and in a unique way.
    In other words, the type $\iszero(a) \dissum \isstrongsuc(a) \dissum \islimit(a)$ is a proposition.
    \qedfullhref{https://cj-xu.github.io/agda/constructive-ordinals-in-hott/index.html\#5485}
\end{lemma}
% \begin{proof}
%     First part: Each of the three summands is individually a proposition.
%     This is clear for $\iszero(a)$ and $\islimit(a)$.
%     Assume that $a$ is the strong successor of both $b$ and $b'$; then we have $b \leq b'$ and $b' \leq b$, implying $b = b'$ by antisymmetry.
    
%     Second part: We now only have to check that any two summands exclude each other.
%     Since the goal is a proposition, we can assume that we are given $b$ and $f$ in the successor and limit case.
%     Assume that $a$ is zero and the successor of $b$. This implies $b < a \leq b$ and thus $b < b$, contradicting irreflexivity.
%     If $a$ is zero and the limit of $f$, the same argument (with $b$ replaced by $f_0$) applies.
%     Finally, assume that $a$ is the strong successor of $b$ and the limit of $f$. These assumptions show that $b$ is an upper bound of $f$, thus we get $a \leq b$. Together with $b < a$, this gives the contradiction $b < b$ as above.
% \end{proof}

We say that an element of $A$ is \emph{classifiable}
%[or \emph{$\uparrow$-classifiable} or \emph{$\downarrow$-classifiable}]
if it is zero or a strong successor or a limit.
We say $(A,<,\leq)$ \emph{has classification} if every element of $A$ is classifiable.
By \cref{lem:only-one-out-of-three}, $(A, <, \leq)$ has classification exactly if the type $\iszero(a) \dissum \isstrongsuc(a) \dissum \islimit(a)$
is contractible.

\begin{theorem}
  \label{thm:classification}
  $\cnf$ and $\brouwer$ have classification.
  $\bookord$ having classification would imply the law of excluded middle.
  \qedhref{https://cj-xu.github.io/agda/constructive-ordinals-in-hott/index.html\#5697}
\end{theorem}

Classifiability corresponds to a case distinction, but the useful
principle from classical ordinal theory is the related induction principle:

\begin{definition}[classifiability induction]
    \label{def:CFI}
    We say that $(A,<,\leq)$ satisfies the principle of \emph{classifiability induction} if the following holds:
    For every family $P : A \to \Prop$ such that
    \begin{align*}
    & \iszero(a) \to P(a) \\
    & (\isstrongsucof{a}{b}) \to P(b) \to P(a) \\
    & (\islimof a f) \to (\forall i. P(f_i)) \to P(a),
    \end{align*}
    we have $\forall a. P(a)$.
\end{definition}

Note that classifiability induction does \emph{not} ask for successors or limits to be computable. Using \cref{lem:only-one-out-of-three}, we get that classifiability induction implies classification.
For the reverse, we need a further assumption:
\begin{theorem}
    \label{thm:CFI}
    Assume $(A,<,\leq)$ has classification and satisfies the principle of transfinite induction.
    Then $(A,<,\leq)$ satisfies the principle of classifiability induction.
    \qedfullhref{https://cj-xu.github.io/agda/constructive-ordinals-in-hott/index.html\#6376}
\end{theorem}
% \begin{proof}
%     With the assumptions of the statement and \eqref{eq:class-ind-zero},\eqref{eq:class-ind-suc}, and \eqref{eq:class-ind-lim}, we need to show $\forall a.P(a)$.
%     By transfinite induction, it suffices to show
%     \begin{equation} \label{eq:tfi-ass}
%     (\forall b<a. P(b)) \to P(a)
%     \end{equation}
%     for some fixed $a$.
%     By classification, we can consider three cases.
%     If $\iszero(a)$, then \eqref{eq:class-ind-zero} gives us $P(a)$, which shows \eqref{eq:tfi-ass} for that case.
%     If $a$ is the strong successor of $b$, we use that the predecessor $b$ is one of the elements that 
%     the assumption of \eqref{eq:tfi-ass} quantifies over; therefore, this is implied by 
%     \eqref{eq:class-ind-suc}.
%     Similarly, if $\islim(a)$, the assumption of \eqref{eq:tfi-ass} gives $\forall i. P(f_i)$, thus \eqref{eq:class-ind-lim} gives $P(a)$.
% \end{proof}

It is also standard in classical set theory that classifiability induction implies transfinite induction:
showing $P$ by transfinite induction corresponds to showing $\forall x<a. P(x)$ by classifiability induction.
In our setting, this would require strong additional assumptions, including the assumption that $(x \leq a)$ is equivalent to $(x < a) \dissum (x = a)$, i.e.\ that $\leq$ is the reflexive closure of $<$.
The standard proof works with several strong assumptions of this form, but we do not consider this interesting or useful, and concentrate on the results which work for the weaker assumptions that are satisfied for $\brouwer$ and $\bookord$ (see \cref{subsubsec:abstract-assumption}).

\begin{theorem}
  \label{thm:classifyability-induction}
  $\cnf$ and $\brouwer$ satisfy classifiability induction, while $\bookord$ satisfying it again implies excluded middle.
  \qedhref{https://cj-xu.github.io/agda/constructive-ordinals-in-hott/index.html\#6694}
  \end{theorem}
  % TODO: Can we make the latter claim more precise? I guess it's equivalent to LEM.

\subsection{Arithmetic}
\label{subsec:abstract-arithmetic}

Using the predicates $\iszero(a)$, $\issucof a b$, and $\issupof{a}{f}$, we can define what it means for $(A,<,\leq)$ to have the standard arithmetic operations.
We still work under the assumptions declared in \cref{subsubsec:abstract-assumption} --- in particular, we do not assume that e.g.\ limits can be calculated, which is important to make the theory applicable to $\cnf$.

\begin{definition}[having addition] \label{def:have-addition}
    We say that $(A,<,\leq)$ \emph{has addition}
    if there is a function $+ : A \to A \to A$ which satisfies the following properties:
    \begin{align}
    & \iszero(a) \to c + a = c \notag \\
    & \issucof a b \to \issucof {d} {(c+b)} \to c + a = d \notag \\
    & \islimof{a}{f} \to \issupof b {(\lambda i. c + f_i)} \to c + a = b  \label{eq:having-addition-3}
    \end{align}
    We say that $A$ \emph{has unique addition} if there is exactly one function $+$ with these properties.
\end{definition}

Note that \eqref{eq:having-addition-3} makes an assumption only for (strictly) \emph{increasing} sequences $f$; this suffices to define a well-behaved notion of addition, and it is not necessary to include a similar requirement for arbitrary sequences.
Since $(\lambda i. c + f_i)$ is a priori not necessarily increasing, the middle term of \eqref{eq:having-addition-3} has to talk about the supremum, not the limit.

Completely analogously to addition, we can formulate multiplication and exponentation, again without assuming that successors or limits can be calculated:

\begin{definition}[having multiplication] \label{def:have-multiplication}
    Assuming that $A$ has addition, we say that it \emph{has multiplication}
    if we have a function $\cdot : A \to A \to A$ that satisfies the following properties:
    \begin{align*}
    & \iszero(a) \to c \cdot a = a \\
    & \issucof a b \to c \cdot a = c \cdot b + c \\
    & \islimof{a}{f} \to \issupof b {(\lambda i. c \cdot f_i)} \to c \cdot a = b
    \end{align*}
    $A$ \emph{has unique multiplication} if it has unique addition and there is exactly one function $\cdot$ with the above properties.
\end{definition}

\begin{definition}[having exponentation]
\label{def:have-exponentation}
    Assume $A$ has addition and multiplication.
%    Assume further that 
%    $A$ has a zero $0$, and an element $1 : A$ with $\issucof{1}{0}$.
%%    Let $c : A$ be given with $1 \leq c$. 
    We say that $A$ \emph{has exponentation with base $c$}
    if we have a function $\mathsf{exp}(c,-) : A \to A$ that satisfies the following properties:
    \begin{align*}
    & \iszero(b) \to \issucof a b \to \mathsf{exp}(c,b) = a \\
    & \issucof a b \to \mathsf{exp}(c,a) = \mathsf{exp}(c,b) \cdot c \\
    & \begin{multlined}
    \islimof{a}{f} \to \neg \iszero(c) \to {\issupof{b}{(\mathsf{exp}(c,f_i))} \to \mathsf{exp}(c,a) = b}
    \end{multlined} \\
    & \islimof{a}{f} \to \iszero(c) \to \mathsf{exp}(c,a) = c
    \end{align*}
    $A$ \emph{has unique exponentation with base $c$} if it has unique addition and multiplication, and if $\mathsf{exp}(c, -)$ is unique.
\end{definition}

\begin{theorem} \label{thm:all-arithmetic-operations}
$\cnf$ has addition, multiplication, and exponentiation with base $\omega$ (all unique), $\brouwer$ has addition, multiplication and exponentiation with every base (all unique), and $\bookord$ has addition and multiplication.
\qedhref{https://cj-xu.github.io/agda/constructive-ordinals-in-hott/index.html\#7417}
\end{theorem}
For $\cnf$, arithmetic is defined by pattern matching on the trees.
Addition\footnote{Caveat: $\fatplus$ is a notation for the tree constructor, while $+$ is an operation that we define. We use parenthesis so that all operations can be read with the usual operator precedence.} is given as
\begin{alignat*}{3}
\tz + b & \defeq b \\
a + \tz & \defeq a \\
(\tom{a}{c}) + (\tom{b}{d}) & \defeq
\begin{cases}
\tom{b}{d} & \text{if $a<b$} \\
\tom{a}{(c+(\tom{b}{d}))} & \text{otherwise,}
\end{cases}
\end{alignat*}
multiplication as
\begin{alignat*}{3}
\tz \cdot b & \defeq \tz \\
a \cdot \tz & \defeq \tz \\
a \cdot (\tom{\tz}{d}) & \defeq a + a \cdot d \\
(\tom{a}{c}) \cdot (\tom{b}{d}) & \defeq (\tom{a+b}{\tz}) + (\tom{a}{c}) \cdot d \quad\text{if $b \not= \tz$,}
\end{alignat*}
and exponentiation with base $\upomega$ by $\upomega^a \defeq \tom a \tz$.
These definitions are standard.
Novel is our proof of correctness in the sense of \cref{def:have-addition,def:have-multiplication,def:have-exponentation}, which we achieve by defining the inverse operations of subtraction and division.

Arithmetic on $\brouwer$ is defined by recursion on the second argument, following the clauses of the specifications in \cref{def:have-addition,def:have-multiplication,def:have-exponentation}.
Since the constructor $\blimit$ only accepts an increasing sequence, it is necessary to
prove mutually with the definition that the operations are monotone and preserve increasing sequences.
However, the case $c = 0$ needs to be treated separately since neither pointwise multiplication nor exponentiation with $0$ preserves increasingness.
This makes it crucial to use classification (\cref{thm:classification}) and, in particular, that it is decidable whether $c : \brouwer$ is zero.

Addition on $\bookord$ is given by disjoint union $A \dissum B$ (with $\inl(a) \prec_{A \dissum B} \inr(b)$), and multiplication by Cartesian product $A \times B$ with the reverse lexicographical order. We expect that exponentation cannot be defined constructively: the ``obvious'' definition via function spaces gives a wellfounded order assuming the law of excluded middle~\cite{holland2005lexicographic}, but it seems unlikely that it can be avoided.

\section{Interpretations Between the Notions}
\label{sec:interpretations}

In this section, we show how our three notions of ordinals can be connected via structure preserving embeddings.

\subsection{From Cantor Normal Forms to Brouwer Trees}

The arithmetic operations of $\brouwer$ allow the construction of a function $\CtoB : \cnf \to \brouwer$ in a canonical way.
We define $\CtoB : \cnf \to \brouwer$ by:
\begin{align*}
  \CtoB(\tz) &\defeq \bzero \\
  \CtoB(\tom a b) &\defeq \omega^{\CtoB(a)} + \CtoB(b)
\end{align*}

\begin{theorem} \label{thm:CtoB-reflects}
  The function $\CtoB$ preserves and reflects $<$ and $\leq$, i.e., $a < b \iff \CtoB(a) < \CtoB(b)$, and  $a \leq b \iff \CtoB(a) \leq \CtoB(b)$.
  \qedfullhref{https://cj-xu.github.io/agda/constructive-ordinals-in-hott/index.html\#7764}
\end{theorem}
%\begin{proof}
%  We show the proof for $<$; each direction of the statement for $\leq$ is a simple consequence.
%
%  ($\Rightarrow$)~By induction on $a<b$. The case when $\tom a b < \tom c d$ because $a < c$ uses \cref{thm:additive-principal}.
%
%  ($\Leftarrow$)~Assume $\CtoB(a)<\CtoB(b)$. If $a \geq b$, then $\CtoB(a) \geq \CtoB(b)$ by ($\Rightarrow$), conflicting the assumption. Hence $a<b$ by the trichotomy of $<$ on $\cnf$.
%\end{proof}

To show that $\CtoB$ preserves $<$, we first prove that Brouwer trees of the form $\omega^x$ are additive principal: if $a < \omega^x$ then $a + \omega^x = \omega^x$ --- a property not true for the ``naive'' version of Brouwer trees without path constructors.
By reflecting $\leq$ and antisymmetry, we have:

\begin{corollary}\label{cor:CtoB-injective}
  The function $\CtoB$ is injective.
  \qedfullhref{https://cj-xu.github.io/agda/constructive-ordinals-in-hott/index.html\#7978}
\end{corollary}

We note that $\CtoB$ also preserves all arithmetic operations
on $\cnf$. For multiplication, this relies on $\iota(n) \cdot \omega^x = \omega^x$ for $\brouwer$, where $\iota : \N \to \brouwer$ embeds the natural numbers as Brouwer trees, and  $\omega \defeq \blimit\, \iota$ --- see our formalisation for details.

\begin{theorem}
    \label{lem:f-arith}
    $\CtoB$ commutes with addition, multiplication, and exponentiation with base $\omega$.
    \qedfullhref{https://cj-xu.github.io/agda/constructive-ordinals-in-hott/index.html\#8063}
\end{theorem}

Lastly, as expected, Brouwer trees define bigger ordinals than Cantor normal forms: when embedded into $\brouwer$, all Cantor normal forms are below $\varepsilon_0$, the limit of the increasing sequence $\omega$, $\omega^{\omega}$, $\omega^{\omega^{\omega}}$, \ldots

\begin{theorem}\label{thm:cnf-below-eps0}
  For all $a : \cnf$, we have $\CtoB(a) < \blimit\,(\lambda k.\omega \uparrow\uparrow k)$, where $\omega \uparrow\uparrow 0 \defeq \omega$ and $\omega \uparrow\uparrow (k+1) \defeq \omega^{\omega \uparrow\uparrow k}$.
  \qedfullhref{https://cj-xu.github.io/agda/constructive-ordinals-in-hott/index.html\#8359}
\end{theorem}
%\begin{proof}
%  By induction on $a$. Using that $\varepsilon_0 = \omega^{\varepsilon_0} = \omega^{\omega^{\varepsilon_0}}$, in the step case we have $\omega^{\CtoB(a)} + \CtoB(b) < \varepsilon_0$ by \cref{thm:additive-principal}, strict monotonicity of $\omega^-$, and the induction hypothesis.
%\end{proof}

%%% Does the following hold?
%
%\begin{theorem}[Continuity of $\CtoB$]
%    The function $\CtoB$ is continuous, in the following sense.
%    If a sequence $s: \N \to \CCNF$ and $t : \CCNF$ are given, together with a proof that $t$ is the supremum of $s$,
%    then $\CtoB(t)$ equals $\blimit(\CtoB \circ s)$.
%\end{theorem}

\subsection{From Brouwer Trees to Extensional Wellfounded Orders}

As $\brouwer$ comes with an order that is extensional, wellfounded, and transitive, it can itself be seen as an element of $\bookord$.
Every ``subtype'' of $\brouwer$ (constructed by restricting to trees smaller than a given tree) inherits this property, giving a canonical function from Brouwer trees to extensional, wellfounded orders. We define
\[
\BtoO(a) = \Sigma(y : \brouwer).(y < a).
\]
with order relation $(y, p) \prec (y', p')$ if $y < y'$. This extends to a function $\BtoO : \brouwer \to \bookord$.
The first projection gives a simulation $\BtoO(a) \leq \brouwer$. Using extensionality of $\brouwer$, this implies that $\BtoO$ is an embedding from $\brouwer$ into $\bookord$. Using that $<$ on $\brouwer$ is propositional, and that
carriers of orders
are sets, it is also not hard to see that $\BtoO$ is
order-preserving:

\begin{lemma}\label{lem:BtoO-injective}
  The function $\BtoO : \brouwer \to \bookord$ is injective, and preserves  $<$ and $\leq$.
  \qedhref{https://cj-xu.github.io/agda/constructive-ordinals-in-hott/index.html\#8420}
\end{lemma}

A natural question is whether the above result can be strengthened further, i.e.\ whether $\BtoO$ is a simulation.
Using $\LEM$ to find a minimal simulation witness, this is possible:

\begin{theorem} \label{thm:lem-implies-simulation}
  Under the assumption of the law of excluded middle, the function $\BtoO : \brouwer \to \bookord$ is a simulation.
  \qed
\end{theorem}
%\begin{proof}
%  Given $b < \BtoO(a)$, we need to find a Brouwer tree $a' < a$ such that $\BtoO(a') = b$. Using $\LEM$, we can choose $a'$ to be the minimal Brouwer tree $x$ such that $b \leq \BtoO(x)$.
%\end{proof}

We do not know whether the reverse of \cref{thm:lem-implies-simulation} is provable, but from the assumption that $\BtoO$ is a simulation, we can derive another constructive taboo:

\begin{theorem}
  \label{thm:BtoO-sim-WLPO}
  If the map $\BtoO : \brouwer \to \bookord$ is a simulation, then WLPO holds.
  \qed
\end{theorem}
%\begin{proof}
%    Let a sequence $s$ be given.
%    We define the type family $S : \N \to \Prop$ by $S n \defeq (s \, n = \mathsf{tt})$ and regard it as a family of orders.
%    We have $\bigsqcup S < \Bool$.
%    Observing that $\Bool \equiv \BtoO(\bsuc (\bsuc \, \bzero))$ and using the assumption that $\BtoO$ is a simulation, we get $b < \bsuc (\bsuc \, \bzero)$ such that $\BtoO \, b = \bigsqcup S$.
%    It is decidable whether $b$ is $\bzero$ or $\bsuc \, \bzero$, and this determines if $s$ is constantly $\bff$ or not.
%\end{proof}

We trivially have $\BtoO(\bzero) = \Empty$. 
One can further prove that $\BtoO$ commutes with limits, i.e.\ 
$\BtoO(\blimit(f)) = \osup{(\BtoO \circ f)}$.
However, $\BtoO$ does \emph{not} commute with successors;
it is easy to see that $\BtoO \, x \dissum \Unit \leq \BtoO(\bsuc \, x)$, but the other direction 
implies WLPO.
This also means that $\BtoO$ does not preserve the arithmetic operations but ``over-approximates''
them, i.e.\ we have $\BtoO(x + y) \geq \BtoO \, x \dissum \BtoO \, y$
and
$\BtoO(x \cdot y) \geq \BtoO \, x \times \BtoO \, y$.

\section{Conclusions and Future Directions}

We have demonstrated that three very different implementations of ordinal numbers, namely Cantor normal forms $(\cnf)$, Brouwer ordinal trees $(\brouwer)$, and extensional wellfounded orders $(\bookord)$, can be studied in a single abstract setting in the context of homotopy type theory.
We hope that our development may shed light on other constructive or formalised approaches to ordinals also in other settings~\cite{MV:ord:acl2,BPT:card:isa,BFT:nested:mset,schmitt:ord:key}.

Cantor normal forms are a formulation where most properties are decidable, while the opposite is the case for extensional wellfounded orders.
Brouwer ordinal trees sit in the middle, with some of its properties being decidable. 
This aspect is not discussed in full in this paper; we only have included \cref{thm:classifyability-induction}.
It is easy to see that, for $x : \brouwer$, it is decidable whether $x$ is finite; in other words, the predicate $(\omega \leq \blank) : \brouwer \to \Prop$ is decidable,
while  $(\omega < \blank)$ is decidable if and only if WLPO holds.

If $x$ is finite, then the predicates $(x = \blank)$, $(x \leq \blank)$, and $(x < \blank)$ are also decidable.
We have a further proof that, if $c : \cnf$ is smaller than $\omega^2$,
then the families $(\CtoB \, c \leq \blank)$ and $(\CtoB \, c < \blank)$ are \emph{semidecidable}, where semidecidability can be defined using the \emph{Sierpinski space}~\cite{alt-dan-kra:partiality,chapman2019quotienting,veltri2017type}.

Thus, each of the canonical maps $\CtoB : \cnf \to \brouwer$ and $\BtoO : \brouwer \to \bookord$ embeds the ``more decidable'' formulation of ordinals into the ``less decidable'' one.
Naturally, they both also include a ``smaller'' type of ordinals into a ``larger'' one:
While every element of $\cnf$ represents an ordinal below $\epsilon_0$, $\brouwer$ can go much further.
It would be interesting to consider more powerful ordinal notation systems such as those based on the Veblen functions~\cite{veblen,schuette} or collapsing functions\cite{bachmann,buchholz}, and see how these compare to Brouwer trees.
Another avenue for potentially extending Cantor normal forms would be using \emph{superleaves}~\cite{Dershowitz_superleaves}; we do not know how such a ``bigger'' version of $\cnf$ would compare to $\brouwer$.

%%% long version of final paragraph:
%
%Since $\brouwer$ can be viewed as an element of $\bookord$, the latter can clearly reach larger ordinals than the former.
%This is essentially an argument that uses the involved universe levels (sizes). 
%As the Burali-Forti argument \cite{burali1897questione} (formulated in homotopy type theory by Bezem, Coquand, Dybjer, and Escard\'{o}~\cite{escardo_et_al:BuraliForti}) shows, higher universes include higher ordinal numbers.
%But even without such size considerations, it is implausible to expect that a construction that only has limits of $\N$-indexed sequences can reach all the ordinals in $\bookord$.

%%% short version of final paragraph:
%
Since $\brouwer$ can be viewed as an element of $\bookord$, the latter can clearly reach larger ordinals than the former.
This is of course not surprising; the Burali-Forti argument~\cite{escardo_et_al:BuraliForti,burali1897questione} shows that lower universes cannot reach the same ordinals as higher universes. Another obstruction for $\brouwer$ to reach the full power of $\bookord$ is the fact that $\brouwer$ only includes limits of $\N$-indexed sequences. To overcome this problem, one can similarly construct higher number classes as quotient inductive-inductive types, e.g.\ a type $\brouwer_3$ closed under limits of $\brouwer$-indexed sequences, and then more generally types $\brouwer_{n+1}$ closed under limits of $\brouwer_n$-indexed sequences, and so on.

Finally, there are interesting connections between the ordinals we can
represent and the proof-theoretic strength of the ambient type theory:
each proof of wellfoundedness for a system of ordinals is also a lower
bound for the strength of the type theory it is constructed in. It is
well known that definitional principles such as simultaneous
inductive-recursive definitions~\cite{IRstrength} and higher inductive
types~\cite{lumsdaine:hits} can increase the proof-theoretical
strength, and so, we hope that they can also be used to faithfully
represent even larger ordinals.

%%% interesting further topics:
%
% (1) a tower of Brouwer ordinal tree types (i.e. limits of sequences indexed over ordinals that are already defined).
% (2) Is the Lumsdaine-Shulman ordinal definable?
% (3) Altenkirch-Fred idea of ordinals as coinductive type with a single constructor

%%% Local Variables:
%%% mode: latex
%%% TeX-master: "root_mfcs"
%%% End:

\bibliographystyle{plain}
\bibliography{references}

\appendix

\section*{Appendix}
\label{sec:appendix}
\addcontentsline{toc}{section}{Appendix}

In addition to our Agda formalisation,
we include paper proofs of the results listed in the paper, organised as follows:
\begin{itemize}
	\item \cref{app:general} contains arguments for the statements that are formulated in our abstract framework;
	\item \cref{app:cnf} proves the results for Cantor normal forms;
	\item \cref{app:brouwer} proves the theorems about Brouwer trees;
	\item \cref{app:orders} gives the arguments for extensional wellfounded orders;
	\item and \cref{app:interpretations} contains proofs for \cref{sec:interpretations}.
\end{itemize}

\section{General proofs}
\label{app:general}

\begingroup
\def\thelemma{\ref{lm:wf:ti}}
\begin{lemma}[Transfinite Induction] 
	Let $<$ be wellfounded and $P : A \to \UU$ be a type family such that
	$\forall a. (\forall b<a.P(b)) \to P(a)$.
	Then, it follows that $\forall a. P(a)$.
\end{lemma}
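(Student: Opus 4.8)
The plan is to prove the apparently stronger statement that $P(a)$ holds for every \emph{accessible} $a$, and then to invoke wellfoundedness. Since $<$ is wellfounded, every $a : A$ comes equipped with a proof of $\acc(a)$; thus, once we have established $(a : A) \to \acc(a) \to P(a)$, we immediately obtain $\forall a. P(a)$ by feeding in these accessibility witnesses.

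To prove $(a : A) \to \acc(a) \to P(a)$, I would proceed by induction on the accessibility proof, that is, via the eliminator of the inductive family $\acc$. As $\acc$ is generated by the single constructor $\mathit{access}$, its induction principle presents a single case. Choosing the motive to be $P$ itself (so that it does not mention the accessibility witness), the induction step requires the following: given $a : A$, a map $h : (b : A) \to b < a \to \acc(b)$, and an induction hypothesis $IH : (b : A) \to (p : b < a) \to P(b)$, we must produce an element of $P(a)$.

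The key observation is that the induction hypothesis $IH$ is literally a proof of $\forall b < a.\, P(b)$, so I can simply apply the assumption \eqref{eq:inductionstep} of the lemma to $a$ and $IH$ to obtain $P(a)$, discharging the induction step; the witness $h$ itself is never needed once $IH$ is available. There is no genuine obstacle here, as the entire content of the lemma is already encoded in the inductive definition of accessibility. The only point requiring (minimal) care is recognising that the inductive hypothesis supplied by the eliminator of $\acc$ coincides exactly with the premise demanded by \eqref{eq:inductionstep}.
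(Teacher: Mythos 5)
Your proof is correct and is exactly the standard derivation the paper has in mind: the lemma is stated with a reference to the inductive presentation of accessibility (following the HoTT book), and the intended argument is precisely induction on the $\acc$ witness with motive $P$, followed by an appeal to wellfoundedness to supply the accessibility proofs. Your observation that the eliminator's inductive hypothesis coincides with the premise \eqref{eq:inductionstep} is the whole content of the proof, so there is nothing to add.
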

\begin{proof}
	This is a special case of the induction principle that one gets from the inductive definition of the accessibility family.
\end{proof}
\addtocounter{theorem}{-1}
\endgroup

\begingroup
\def\thelemma{\ref{thm:calc-succ-characterisation}}
\begin{lemma}
	Let $s : A \to A$ be given. The function
	$s$ calculates successors if and only if $\forall b x. (b < x) \leftrightarrow (s\, b \leq x)$.
\end{lemma}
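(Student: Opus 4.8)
The plan is to prove both implications of the biconditional, using the assumptions \ref{item:assumption1}--\ref{item:assumption3} to bridge between the successor characterisation and the order-theoretic statement $\forall b\, x.\, (b < x) \leftrightarrow (s\, b \leq x)$.

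For the forward direction, suppose $s$ calculates successors, so for every $b$ we have $\issucof{s(b)}{b}$, which unfolds to $(b < s(b)) \times (\forall x > b.\, x \geq s(b))$. Fix $b$ and $x$. To show $(b < x) \to (s\, b \leq x)$ is immediate: this is exactly the second conjunct $\forall x > b.\, s(b) \leq x$. For the converse $(s\, b \leq x) \to (b < x)$, I would combine the first conjunct $b < s(b)$ with the assumption $s(b) \leq x$ using \eqref{eq:<-leq-good-condition}, i.e.\ the inclusion $(< \circ \leq) \subseteq (<)$, to conclude $b < x$.

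For the backward direction, suppose $\forall b\, x.\, (b < x) \leftrightarrow (s\, b \leq x)$; I must show $\issucof{s(b)}{b}$ for each $b$, i.e.\ establish both conjuncts. To get $b < s(b)$, I would instantiate the hypothesis at $x \defeq s(b)$: the right-hand side $s\, b \leq s\, b$ holds by reflexivity of $\leq$ (assumption \ref{item:assumption2}), so the biconditional yields $b < s(b)$. To get the minimality condition $\forall x > b.\, s(b) \leq x$, I take any $x$ with $b < x$ and apply the forward direction of the hypothesis to obtain $s(b) \leq x$ directly.

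Neither direction is genuinely hard; the only subtle point — and hence the closest thing to an obstacle — is making sure the right instance of assumption \ref{item:assumption3} is invoked in the forward direction. One must use the ``good'' composition $(< \circ \leq) \subseteq (<)$ rather than the unassumed ``bad'' direction \eqref{eq:leq-<-bad-direction}; this is exactly the use of \ref{item:assumption3} that the lemma statement advertises. Everything else reduces to unfolding the definition of $\issucof{s(b)}{b}$ and applying reflexivity of $\leq$.
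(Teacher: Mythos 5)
Your proof is correct and follows essentially the same route as the paper's: the forward direction uses the minimality conjunct of $\issucof{s(b)}{b}$ for $\to$ and combines $b < s(b)$ with $(< \circ \leq) \subseteq (<)$ from assumption \ref{item:assumption3} for $\leftarrow$, while the backward direction instantiates the hypothesis at $x \equiv s(b)$ via reflexivity of $\leq$, exactly as in the paper. Your remark about using the ``good'' composition rather than the unassumed \eqref{eq:leq-<-bad-direction} correctly identifies the one point of care.
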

\begin{proof}
    Assume that $s$ calculates successors; we have to show  $(b < x) \leftrightarrow (s\, b \leq x)$. The part $\to$ is clear.
%     since \eqref{eq:succ-explain-notation} is assumed.
    Further, since $s\, b$ is the successor of $b$, we have $b < s\, b$. Together with
    $(<) \subseteq (\leq)$ and $(< \circ \leq) \subseteq (<)$, this shows the part $\leftarrow$.

    Now assume $(b < x) \leftrightarrow (s\, b \leq x)$.
    By reflexivity of $\leq$, the case $x \equiv s\, b$ shows $b < s\, b$, and the direction $\to$ directly gives the second part of ``$s$ calculates successors''.
\end{proof}
\addtocounter{theorem}{-1}
\endgroup

\begingroup
\def\thelemma{\ref{lem:only-one-out-of-three}}
\begin{lemma}
	Any $a:A$ can be at most one out of \{zero, strong successor, limit\}, and in a unique way.
	In other words, the type $\iszero(a) \dissum \isstrongsuc(a) \dissum \islimit(a)$ is a proposition.
\end{lemma}
\begin{proof}
    First part: Each of the three summands is individually a proposition.
    This is clear for $\iszero(a)$ and $\islimit(a)$.
    Assume that $a$ is the strong successor of both $b$ and $b'$; then we have $b \leq b'$ and $b' \leq b$, implying $b = b'$ by antisymmetry.
    
    Second part: We now only have to check that any two summands exclude each other.
    Since the goal is a proposition, we can assume that we are given $b$ and $f$ in the successor and limit case.
    Assume that $a$ is zero and the successor of $b$. This implies $b < a \leq b$ and thus $b < b$, contradicting irreflexivity.
    If $a$ is zero and the limit of $f$, the same argument (with $b$ replaced by $f_0$) applies.
    Finally, assume that $a$ is the strong successor of $b$ and the limit of $f$. These assumptions show that $b$ is an upper bound of $f$, thus we get $a \leq b$. Together with $b < a$, this gives the contradiction $b < b$ as above.
\end{proof}
\addtocounter{theorem}{-1}
\endgroup

\begingroup
\def\thetheorem{\ref{thm:CFI}}
\begin{theorem}
	Assume $(A,<,\leq)$ has classification and satisfies the principle of transfinite induction.
	Then $(A,<,\leq)$ satisfies the principle of classifiability induction:
        for every family $P : A \to \Prop$ such that
            \begin{align}
    & \iszero(a) \to P(a) \label{eq:class-ind-zero} \\
    & (\isstrongsucof{a}{b}) \to P(b) \to P(a) \label{eq:class-ind-suc}\\
    & (\islimof a f) \to (\forall i. P(f_i)) \to P(a), \label{eq:class-ind-lim}
    \end{align}
    we have $\forall a . P(a)$.
\end{theorem}
\begin{proof}
%	With the assumptions of the statement and \eqref{eq:class-ind-zero},\eqref{eq:class-ind-suc}, and \eqref{eq:class-ind-lim}, we need to show $\forall a.P(a)$.
	By transfinite induction, it suffices to show
	\begin{equation} \label{eq:tfi-ass}
	(\forall b<a. P(b)) \to P(a)
	\end{equation}
	for some fixed $a$.
	By classification, we can consider three cases.
	If $\iszero(a)$, then \eqref{eq:class-ind-zero} gives us $P(a)$, which shows \eqref{eq:tfi-ass} for that case.
	If $a$ is the strong successor of $b$, we use that the predecessor $b$ is one of the elements that 
	the assumption of \eqref{eq:tfi-ass} quantifies over; therefore, this is implied by 
	\eqref{eq:class-ind-suc}.
	Similarly, if $\islim(a)$, the assumption of \eqref{eq:tfi-ass} gives $\forall i. P(f_i)$, thus \eqref{eq:class-ind-lim} gives $P(a)$.
\end{proof}
\addtocounter{theorem}{-1}
\endgroup

\begin{lemma}\label{thm:CFI-to-classification}
If $A$ satisfies the principle of classifiability induction then it also has classification.
\end{lemma}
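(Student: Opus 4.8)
The plan is to instantiate classifiability induction with the very property we wish to establish. Define $P : A \to \UU$ by
\[
P(a) \defeq \iszero(a) \dissum \isstrongsuc(a) \dissum \islim(a),
\]
so that $\forall a.\, P(a)$ is, by definition, the assertion that $(A,<,\leq)$ has classification. The crucial preliminary observation is that $P$ is in fact a family of \emph{propositions}: this is precisely the content of \cref{lem:only-one-out-of-three}, which states that this three-way sum is a proposition for every $a$. Hence $P : A \to \Prop$, and we are entitled to apply the principle of classifiability induction (\cref{def:CFI}) to it.

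It then remains to verify the three hypotheses \eqref{eq:class-ind-zero}, \eqref{eq:class-ind-suc}, and \eqref{eq:class-ind-lim}. Each is immediate, and in fact we shall discard the inductive hypotheses entirely. For \eqref{eq:class-ind-zero}, given $\iszero(a)$ we simply return the left injection into $P(a)$. For \eqref{eq:class-ind-suc}, we are handed a witness $b$ together with a proof of $\isstrongsucof{a}{b}$; pairing these yields an inhabitant of $\isstrongsuc(a) \equiv \Sigma(b:A).\, \isstrongsucof{a}{b}$, which we inject into the middle summand (the supplied $P(b)$ is not needed). For \eqref{eq:class-ind-lim}, we are given an increasing sequence $f$ together with a proof of $\islimof{a}{f}$; truncating this pair produces an inhabitant of $\islim(a) \equiv \exists f.\, \islimof{a}{f}$, which we inject into the right summand (again the hypothesis $\forall i.\, P(f_i)$ plays no role).

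Applying classifiability induction therefore yields $\forall a.\, P(a)$, that is, every $a : A$ is classifiable, which is exactly the statement that $A$ has classification.

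I expect essentially no obstacle here. The only substantive ingredient is that $P$ must land in $\Prop$, since classifiability induction is stated only for $\Prop$-valued families, and this is guaranteed by \cref{lem:only-one-out-of-three}. The mild surprise worth flagging is that the recursive data provided by the induction principle are never used: classification follows purely from the fact that each clause of classifiability induction hands us exactly the witness needed to populate the corresponding summand of $P(a)$.
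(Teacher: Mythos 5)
Your proof is correct and matches the paper's own argument exactly: the paper likewise takes the three-way sum as the motive of classifiability induction, with \cref{lem:only-one-out-of-three} guaranteeing it is $\Prop$-valued. Your spelled-out verification of the three clauses (and the observation that the inductive hypotheses are discarded) is just the detail the paper leaves implicit in calling the induction ``straightforward''.
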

\begin{proof}
By \cref{lem:only-one-out-of-three}, we can use the type
\begin{equation}
\iszero(a) \dissum \isstrongsuc(a) \dissum \islimit(a)
\end{equation} 
as the motive of a straightforward classifiability induction. 
\end{proof}

% !TEX root = root_mfcs.tex

\section{Proofs for Cantor Normal Forms} \label{app:cnf}

We firstly prove the theorems about properties of $\cnf$ and its relations.

\begingroup
\def\thetheorem{\ref{thm:general-notions}}
\begin{theorem}[for Cantor normal forms]
	$\cnf$ is a set and the relations $<$ and $\leq$ are valued in propositions.
	In addition, $<$ and $\leq$ are transitive, $<$ is irreflexive, and $\leq$ is reflexive and antisymmetric.
	$<$ is trichotomous and $\leq$ connex.
\end{theorem}
\begin{proof}
	Most properties are proved by induction on the arguments. We prove the trichotomy property as an example. It is trivial when either argument is zero. Given $\tom a b$ and $\tom c d$, by induction hypothesis we have $a < c \uplus a = c \uplus c < a$ correspondingly. For the first and last cases, we have $\tom a b < \tom c d$ and $\tom c d < \tom a b$. For the middle case, the induction hypothesis on $b$ and $d$ gives the desired result.
\end{proof}
\addtocounter{theorem}{-1}
\endgroup

\begingroup
\def\thetheorem{\ref{thm:<-extensional}}
\begin{theorem}[for Cantor normal forms]
	The relations $<$ and $\leq$ on $\cnf$ are extensional.
\end{theorem}
\begin{proof}
	Extensionality is trivial for an irreflexive and trichotomous relation, which covers $<$, and is also trivial for a reflexive and antisymmetric relation, which covers $\leq$.
\end{proof}
\addtocounter{theorem}{-1}
\endgroup

\begingroup
\def\thetheorem{\ref{thm:<-wellfounded}}
\begin{theorem}[for Cantor normal forms]
	The relation $<$ on $\cnf$ is wellfounded.
\end{theorem}
\begin{proof}
	See the proof of \cite[Theorem~5.1]{NFXG:three:ord}.
\end{proof}
\addtocounter{theorem}{-1}
\endgroup

\begingroup
\def\thetheorem{\ref{thm:all-satisfy-assumptions}}
\begin{theorem}[for Cantor normal forms]
	$\cnf$ and its relations satisfy:
\begin{enumerate}[({A}1)]
	\item \label{item:assumption1Cnf}
	$<$ is transitive and irreflexive;
	\item \label{item:assumption2Cnf}
	$\leq$ is reflexive, transitive, and antisymmetric;
	\item \label{item:assumption3Cnf}
	we have $(<) \subseteq (\leq)$ and $(< \circ \leq) \subseteq (<)$.
\end{enumerate}	
\end{theorem}
\begin{proof}
	By induction on arguments as in the proof of \cref{thm:general-notions}.
\end{proof}
\addtocounter{theorem}{-1}
\endgroup

%We write $\tone \defeq \upomega^\tz$ and $\upomega \defeq \upomega^\tone$. As will become clear, $\tone$ is the successor of $\tz$, and $\omega$ is the limit of the sequence $\eta : \N \to \cnf$ that embeds natural numbers into CNFs, defined by $\eta(0) \defeq \tz$ and $\eta(i+1) \defeq \tom \tz \eta(i)$.

%The above operations are well-defined and have the expected ordinal arithmetic properties:
%\begin{lemma}
%If $a,b$ are CNFs, then so are $a+b$ and $a \cdot b$. Both operations are associative and strictly increasing in the right argument. Moreover, $(\cdot)$ is distributive on the left, i.e., $a \cdot (b + c) = a \cdot b + a \cdot c$. \qed
%\end{lemma}

%We define the CNF $\upomega$ by $\upomega \defeq \tom{\tone}{\tz}$ and the exponentiation $\upomega^a$ of CNF $a$ with base $\omega$ by $\upomega^a \defeq \tom{a}{\tz}$.
%\begin{lemma}
%We have $\tom{a}{b}= \upomega^a + b$ for any CNF $\tom{a}{b}$. \qed
%\end{lemma}

\begingroup
\def\thetheorem{\ref{thm:all-zero}}
\begin{theorem}[for Cantor normal forms]
	$\cnf$ has a zero.
\end{theorem}
\begin{proof}
	The element $\tz : \cnf$ is a zero.
\end{proof}
\addtocounter{theorem}{-1}
\endgroup

\begingroup
\def\thetheorem{\ref{thm:strong-succs}}
\begin{theorem}[for Cantor normal forms]
	$\cnf$ has strong successors, and it is both $<$- and $\leq$-monotone.
\end{theorem}
\begin{proof}
	It is trivial to show that $a + \tone$ is a strong successor of $a$. The monotonicity properties of $\_+\tone$ is proved by induction on the argument.
\end{proof}
\addtocounter{theorem}{-1}
\endgroup

\begingroup
\def\thetheorem{\ref{thm:sups-lims}}
\begin{theorem}[for Cantor normal forms]
  $\cnf$ does not have suprema or limits.
  Assuming the law of exclude middle, $\cnf$ has suprema (and thus limits) of arbitrary \emph{bounded} sequences. If $\cnf$ has limits of bounded increasing sequences, then the weak limited principle of omniscience (WLPO) is derivable.
\end{theorem}
\begin{proof}
To show that $\cnf$ does not have suprema or limits, we use the sequence $\omega\uparrow\uparrow$ from \cref{thm:cnf-below-eps0} as a counterexample. Recall $\omega \uparrow\uparrow 0 \defeq \omega$ and $\omega \uparrow\uparrow (k+1) \defeq \omega^{\omega \uparrow\uparrow k}$. If it has a limit, say $x : \cnf$, then any CNF $a$ is strictly smaller than $x$, including $x$ itself. But this contradicts to irreflexivity.

Assume $\LEM$. Given a bounded sequence $f$ with bound $b$, consider the subset $P : \cnf \to \Prop$ of all Cantor normal forms that are upper bounds of $f$.
This subset contains at least $b$ and is thus non-empty.
Since $<$ on $\cnf$ is extensional and wellfounded, \cite[Thm~10.4.3]{hott-book} implies that $P$ has a least element, and this least upper bound is a supremum of $f$.

The last part uses arithmetic for Cantor normal forms.
If $\cnf$ has limits of bounded increasing sequences, assume $s : \N \to \Bool$ is a given sequence.
Define $f : \N \to \cnf$ by $f_0 \defeq 0$ and 
\begin{equation}
f_{n+1} \defeq 
\begin{cases}
f_n + 1 & \text{if } s_n = \mathsf{ff} \\
\mathsf{max}(f_n, \omega) + 1 &\text{if } s_n = \mathsf{tt}.
\end{cases}
\end{equation}
	The increasing sequence is bounded by $\omega + \omega$. We can check whether its limit is $\omega$, in which case the original sequence $s$ is constantly $\mathsf{ff}$, otherwise it is not.
\end{proof}
\addtocounter{theorem}{-1}
\endgroup

Even though we cannot compute limits of arbitrary sequences in $\cnf$, if a $x$ is not zero or a successor, then we can always find a \emph{fundamental sequence} whose limit is $x$ --- this is enough to establish the following:

\begin{lemma} \label{lem:cnf:limit}
If a CNF is neither zero nor a successor, then it is a limit.
\end{lemma}
\begin{proof}
If a CNF $x$ is neither zero nor a successor, then $x = \tom a \tz$ with $a>0$ or $x = \tom a b$ where $b > 0$ is not a successor. There are three possible cases, for each of which we construct an increasing sequence $s : \N \to \cnf$ whose limit is $x$:
\begin{enumerate}[(i)]
\item If $x = \tom a \tz$ and $a=c+1$, we define $s_i \defeq (\tom c 0) \cdot \eta i$ where $\eta : \N \to \cnf$ embeds natural numbers to CNFs. %(because $\omega^a = \omega^{c+1} = \omega^c \cdot \omega = \blimit(\omega^c \cdot i)$).
\item If $x = \tom a \tz$ and $a$ is not a successor, the induction hypothesis on $a$ gives a sequence $r$, and then we define $s_i \defeq \tom {r_i} \tz$. %(because $\omega^a = \omega^{\blimit(r)} = \blimit(\omega^{r_i})$).
\item If $x = \tom a b$ and $b>0$ is not a successor, the induction hypothesis on $b$ gives a sequence $r$, and then we define $s_i \defeq \tom a {r_i}$. %(because $\omega^a + b = \omega^{a} + \blimit(r) = \blimit(\omega^a+r_i)$).
  \qedhere
\end{enumerate}
\end{proof}

\begingroup
\def\thetheorem{\ref{thm:classification}}
\begin{theorem}[for Cantor normal forms]
	$\cnf$ has classification.
\end{theorem}
\begin{proof}
	Since $\cnf$ has decidable equality, being zero and being a successor are both decidable. Then \cref{lem:cnf:limit} gives the desired result.
\end{proof}
\addtocounter{theorem}{-1}
\endgroup

\begingroup
\def\thetheorem{\ref{thm:classifyability-induction}}
\begin{theorem}[for Cantor normal forms]
	$\cnf$ satisfies classifiability induction.
\end{theorem}
\begin{proof}
	By \cref{thm:<-wellfounded,thm:classification,thm:CFI}. 
\end{proof}
\addtocounter{theorem}{-1}
\endgroup

To show that the operations $(+)$ and $(\cdot)$ correctly implement addition and multiplication in the sense of \cref{def:have-addition,def:have-multiplication}, we construct their inverse operations \emph{subtraction} and \emph{division}:
\begin{lemma}[Subtraction and division of CNFs] \label{lem:sub-div}
For all CNFs $a,b$,
\begin{enumerate}
\item \label{item:one}
if $a \leq b$, then there is a CNF $c$ such that $a+c=d$;
\item \label{item:two}
if $b > \tz$, then there are CNFs $c$ and $d$ such that $a=b \cdot c + d$ and $d<b$. \qedhere
\end{enumerate}
\end{lemma}
\begin{proof}
For \ref{item:one} we define subtraction as follows:
\begin{alignat*}{3}
    \tz - b & \defeq \tz \\
    a - \tz & \defeq a \\
    (\tom{a}{c}) - (\tom{b}{d}) & \defeq
    \begin{cases}
    \tz & \text{if $a<b$} \\
    c-d & \text{if $a=b$} \\
    \tom{a}{c} & \text{if $a>b$}.
    \end{cases}
\end{alignat*}
The proof of \ref{item:two} consists of the following cases:
\begin{itemize}
\item If $a<b$, then we take $c \defeq \tz$ and $d \defeq a$.
\item If $a=b$, then we take $c \defeq \tone$ and $d \defeq \tz$.
\item If $a>b$, then there two possibilities:
\begin{enumerate}[(i)]
\item $a=\tom u u'$ and $b=\tom v v'$ with $u>v$. By the induction hypothesis on $u'$ and $b$, we have $c'$ and $d$ such that $u'=b \cdot c' + d$ and $d<b$. We take $c \defeq \tom{(u-v)}{c'}$ and then have $a = \tom u u' = b \cdot \upomega^{(a-b)} + b \cdot c' + d = b \cdot c + d$.
\item $a=\tom u u'$ and $b=\tom u v'$ with $u'>v'$. By the induction hypothesis on $u'-v'$ and $b$, we have $c'$ and $d$ such that $u'-v' = b \cdot c' + d$ and $d<b$. We take $c \defeq c' + \tone$ and then have $a = \tom u u' = \upomega^u + v' + (u' - v') = b + b \cdot c' + d = b \cdot c + d$.
\end{enumerate}
\end{itemize}
The above defines the (Euclidean) division of CNFs.
\end{proof}

\begingroup
\def\thetheorem{\ref{thm:all-arithmetic-operations}}
\begin{theorem}[for Cantor normal forms]
	$\cnf$ has addition, multiplication, and exponentiation with base $\upomega$, and all operations are unique.
\end{theorem}
\begin{proof}
As an example, we verify that $(+)$ implements addition, i.e., it satisfies
\begin{enumerate}
\item \label{item:zer} $\iszero(a) \to c + a = c$,
\item \label{item:suc} $\issucof a b \to \issucof {d} {(c+b)} \to c + a = d$, and
\item \label{item:lim} $\islimof{a}{f} \to \issupof b {(\lambda i. c + f_i)} \to c + a = b$.
\end{enumerate}
For~\ref{item:zer}, it suffices to show that $\tz$ is the additive identity, which is easy. For~\ref{item:suc}, it suffices to show $c + (b+1) = (c+b)+1$ which holds by associativity of $(+)$. For~\ref{item:lim}, it suffices to show that $c+a$ is the limit of $\lambda i.c+f_i$. Firstly, we have $c+f_i \leq c + a$ for all $i$ because $a$ is the limit of $f$ and $(+)$ is monotone on the right argument. It remains to show that $c+a$ is at least as large as any $x$ with $c+f_i \leq x$ for all $i$. Given such $x$, we have $f_i \leq x-c$ for all $i$. Because $a$ is the limit of $f$, we have $a \leq x-c$. Therefore, we have $c+a \leq c + (x-c) = x$ where the equation holds due to Lemma~\ref{lem:sub-div}.\ref{item:one}. 
\end{proof}
\addtocounter{theorem}{-1}
\endgroup

\section{Proofs about Brouwer Trees}
\label{app:brouwer}

\begin{lemma} \label{lem:distincton-of-constructors}
	The constructors of $\brouwer$ are distinguishable, i.e.\ one can construct proofs of $\bzero \neq \bsuc \, x$, $\bzero \neq \blimit \, f$, and $\bsuc \, x \neq \blimit \, f$.
\end{lemma}
\begin{proof}
Point constructors are not always distinct in the presence of path constructors.
Nevertheless, this is fairly simple in our case, as the path constructor $\bbisim$ only equates limits, and the standard strategy of simply defining distinguishing families such as $\isZero : \brouwer \to \Prop$ works.
Setting
\begin{alignat*}{4}
&\isZero &\;& \bzero &\, & \defeq &\quad &  \Unit \\
&\isZero && (\bsuc \, x) && \defeq &&  \Empty \\
&\isZero && (\blimit \, f) && \defeq &&  \Empty 
\end{alignat*}
means the proof obligations for the path constructors ($\bbisim$ and the truncation constructor) are trivial. Now if $\bzero = \bsuc \, x$, since $\isZero \; \bzero$ is inhabited, $\isZero\, (\bsuc \, x) \equiv \Empty$ must be as well --- a contradiction, which shows $\neg (\bzero = \bsuc \, x)$.

In the same way, one shows the other parts. For the details, we refer to our formalisation.
\end{proof}

\subsection*{Characterisation of $\leq$}

In order to prove some of the non-trivial properties of $\brouwer$, it is necessary to characterise the relation $\leq$.
We use a strategy corresponding to the \emph{encode-decode method} \cite{licataShulman_circle} and define the type family
\begin{equation*}
\Code : \brouwer \to \brouwer \to \Prop
\end{equation*}
mutually with proofs of its \emph{correctness} properties
\begin{alignat*}{5}
& \toCode : &\; & x \leq y \to \Code \, x \, y \\
& \fromCode : && \Code \, x \, y \to x \leq y,
\end{alignat*}
for every $x, y : \brouwer$,
with the goal of providing a concrete description of $x \leq y$.
On the point constructors, the definition works as follows:
\newcommand{\noparen}{\phantom{(}}
\begin{alignat}{7}
&\Code &\;\;& \noparen\bzero &\;&  \noparen \blank &\; &\defeq &\quad &  \Unit \nonumber \\
&\Code && (\bsuc \, x) &&  \noparen\bzero &&\defeq &&  \Empty \nonumber \\
&\Code && (\bsuc \, x) &&  (\bsuc \, y) &&\defeq &&  \Code \; x \; y \label{eq:code-suc-suc}\\
&\Code && (\bsuc \, x) &&  (\blimit \, f) &&\defeq &&  \exists n. \Code \, (\bsuc \, x) \, (f \, n) \label{eq:code-suc-limit} \\
&\Code && (\blimit \, f) &&  \noparen\bzero &&\defeq &&  \Empty \nonumber \\
&\Code && (\blimit \, f) &&  (\bsuc \, y) &&\defeq &&  \forall k. \Code \, (f \, k) \, (\bsuc \, y) \label{eq:code-limit-suc}\\
&\Code && (\blimit \, f) &&  (\blimit \, g) &&\defeq &&
\forall k. \exists n.  \Code \, (f \, k) \, (g \, n)
\label{eq:code-limit-limit}
\end{alignat}
%where $f \approx^{\mathsf{C}}$ is the appropriate formulation of weak bisimulation that uses $\Code$ instead of $\leq$.

As explained in the paper, the path constructor $\bbisim$ proves to be challenging.
We refer to our Agda formalisation for the details.

\begin{lemma} \label{lem:brouwer-code}
	For $x,y : \brouwer$, the type $\Code \, x \, y$ is equivalent to $x \leq y$.
\end{lemma}
\begin{proof}
	Both types are propositions, and maps in both directions are given by $\toCode$ and $\fromCode$.
\end{proof}

$\Code$ allows us to easily derive various useful auxiliary lemmas, for example the following four:

\begin{lemma} \label{lem:suc-mono-inv}
	For $x,y : \brouwer$, we have $x \leq y \leftrightarrow \bsuc \, x \leq \bsuc \, y$.
\end{lemma}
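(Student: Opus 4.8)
The plan is to prove both directions of $x \leq y \leftrightarrow \bsuc\,x \leq \bsuc\,y$ using the correctness of $\Code$, rather than working with the $\leq$-constructors directly. Recall that equation~\eqref{eq:code-suc-suc} gives us the crucial computation rule $\Code\,(\bsuc\,x)\,(\bsuc\,y) \equiv \Code\,x\,y$ definitionally. This is exactly the bridge we need: the codes for $\bsuc\,x \leq \bsuc\,y$ and for $x \leq y$ are literally the same type.

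For the forward direction, suppose $x \leq y$. The constructor $\lsuccmono$ already gives $\bsuc\,x \leq \bsuc\,y$ immediately, so this half needs no appeal to $\Code$ at all. For the backward direction, suppose $\bsuc\,x \leq \bsuc\,y$. First I would apply $\toCode$ to obtain $\Code\,(\bsuc\,x)\,(\bsuc\,y)$. By the definitional equality~\eqref{eq:code-suc-suc}, this is the same as $\Code\,x\,y$. Then I would apply $\fromCode$ to recover $x \leq y$, completing the equivalence.

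Since the statement asserts a logical equivalence between two propositions (both $\leq$-relations are valued in propositions), it suffices to exhibit functions in both directions, which is what the two steps above provide. Concretely, the whole argument is the composite
\begin{equation}
\bsuc\,x \leq \bsuc\,y \xrightarrow{\toCode} \Code\,(\bsuc\,x)\,(\bsuc\,y) \equiv \Code\,x\,y \xrightarrow{\fromCode} x \leq y,
\end{equation}
together with $\lsuccmono$ for the reverse implication.

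The only subtlety, and the reason this lemma comes \emph{after} the construction of $\Code$ rather than being provable directly, is that it relies on $\toCode$ and $\fromCode$ being already available as total functions on all of $\brouwer$. These are established as part of the large mutual higher inductive-inductive construction referenced above (carried out in full in the Agda formalisation), so I would simply invoke them here. Given that machinery, there is no real obstacle remaining: the proof is essentially a one-line transport along the definitional unfolding of $\Code$ on successor arguments.
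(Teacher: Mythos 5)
Your proof is correct and matches the paper's own argument, which likewise derives the lemma directly from the definitional equality~\eqref{eq:code-suc-suc} together with the correctness functions $\toCode$ and $\fromCode$ (using $\lsuccmono$ for the easy direction is a minor, harmless simplification). Nothing further is needed.
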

\begin{proof}
	This is a direct consequence of \eqref{eq:code-suc-suc} and the correctness of $\Code$.
\end{proof}

\begin{lemma} \label{lem:x-under-limit-means-x-under-element}
	Let $f : \N \toincr{<} \brouwer$ be an increasing sequence and $x : \brouwer$ a Brouwer tree such that $x < \blimit \, f$.
	Then, there exists an $n : \N$ such that $x < f \, n$.
\end{lemma}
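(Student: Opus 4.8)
The plan is to exploit the concrete characterisation of $\leq$ provided by the family $\Code$, which is exactly what makes statements of this shape tractable (recall from the wellfoundedness discussion that a direct induction on $\blimit\, f$ fails, because the induction hypothesis only gives us accessibility of each $f\,k$, not a usable handle on $x < \blimit\, f$). Here the goal is a proposition, so I am free to use both directions of the correctness of $\Code$ and to eliminate propositional truncations.

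Concretely, I would first unfold the hypothesis $x < \blimit\, f$ to its definition $\bsuc\, x \leq \blimit\, f$. Applying $\toCode$ then yields $\Code\,(\bsuc\, x)\,(\blimit\, f)$. By the defining clause \eqref{eq:code-suc-limit}, this term is (definitionally) an inhabitant of
\begin{equation}
\exists n.\ \Code\,(\bsuc\, x)\,(f\, n).
\end{equation}
Since the desired conclusion $\exists n.\ x < f\, n$ is itself a proposition, I may eliminate this truncation and obtain a concrete index $n : \N$ together with a proof of $\Code\,(\bsuc\, x)\,(f\, n)$. Applying $\fromCode$ to the latter gives $\bsuc\, x \leq f\, n$, which is precisely $x < f\, n$, witnessing the required existential.

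I do not expect any genuine obstacle in this proof: the entire difficulty has been front-loaded into the mutual higher inductive-inductive construction of $\Code$ and $\toCode$ (and the accompanying transitivity/reflexivity lemmas) described above. Once that machinery is in place, the clause \eqref{eq:code-suc-limit} is tailored so that ``$\bsuc\, x$ lies below a limit'' translates directly into ``$\bsuc\, x$ lies below some member of the sequence'', and the lemma is an immediate read-off via $\toCode$ and $\fromCode$.
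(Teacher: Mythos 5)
Your proof is correct and is essentially identical to the paper's: both unfold $x < \blimit\, f$ to $\bsuc\, x \leq \blimit\, f$, apply $\toCode$ together with clause \eqref{eq:code-suc-limit} to extract the existential, and convert back with $\fromCode$. The only difference is that you spell out the (routine) truncation elimination, which the paper leaves implicit.
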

\begin{proof}
	By definition, $x < \blimit \, f$ means $\bsuc \, x \leq \blimit \, f$.
	Using $\toCode$
	together with the case
	\eqref{eq:code-suc-limit},
	there exists an $n$ such that $\Code \, (\bsuc \, x) \, (f \, n)$.
	Using $\fromCode$, we get the result.
\end{proof}

\begin{lemma}\label{lem:lim-under-lim-simulation}
	If $f$, $g$ are increasing sequences such that $\blimit \, f \leq \blimit \, g$, then $f$ is simulated by $g$.
\end{lemma}
\begin{proof}
	For a given $k$, \eqref{eq:code-limit-limit} tells us that there exists an $n$ such that, after using $\fromCode$, we have $f \, k \leq g \, n$.
\end{proof}

\begin{lemma} \label{lem:lim-under-suc}
	If $f$ is an increasing sequence and $x$ a Brouwer tree such that $\blimit \, f \leq \bsuc \, x$, then $\blimit \, f \leq x$.
\end{lemma}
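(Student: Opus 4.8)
The plan is to reduce the goal $\blimit\,f \leq x$ to establishing $f\,k \leq x$ for every $k : \N$, and then to close the argument with a single application of the $\llimiting$ constructor. All of the real content therefore lies in proving $\forall k.\, f\,k \leq x$ from the hypothesis $\blimit\,f \leq \bsuc\,x$.

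The naive attempt — namely to extract $f\,k \leq \bsuc\,x$ from the hypothesis (via $\toCode$ and \eqref{eq:code-limit-suc}, then $\fromCode$) and hope to strip off the outer $\bsuc$ — does \emph{not} work, since $f\,k \leq \bsuc\,x$ is strictly weaker than $f\,k \leq x$, as the case $f\,k = \bsuc\,x$ shows. The observation that makes the lemma hold is that $f$ is \emph{increasing}, which lets me ``look one step ahead'' in the sequence: for each $k$ we have $f\,k < f\,(k+1)$, that is, $\bsuc\,(f\,k) \leq f\,(k+1)$.

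Concretely, for a fixed $k$ I would chain three inequalities: first $\bsuc\,(f\,k) \leq f\,(k+1)$ from the increasing property of $f$; then $f\,(k+1) \leq \blimit\,f$, obtained from $\lcocone$ at index $k+1$ applied to reflexivity of $\leq$; and finally $\blimit\,f \leq \bsuc\,x$, which is the hypothesis. Transitivity ($\ltrans$) assembles these into $\bsuc\,(f\,k) \leq \bsuc\,x$. Applying the right-to-left direction of \cref{lem:suc-mono-inv} (with the bound variables instantiated to $f\,k$ and $x$) inverts the successors and yields $f\,k \leq x$, as required. With $\forall k.\, f\,k \leq x$ in hand, $\llimiting$ immediately produces $\blimit\,f \leq x$.

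I do not anticipate a serious obstacle, since the heavy machinery — the $\Code$ characterisation of $\leq$ and the successor-inversion of \cref{lem:suc-mono-inv} that it powers — is already available. The only point demanding a moment's care is recognising that the hypothesis must be combined with increasingness of $f$ rather than used pointwise on each $f\,k$; this is precisely why the statement is formulated for increasing sequences.
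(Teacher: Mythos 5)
Your proof is correct and takes essentially the same route as the paper's: both use increasingness of $f$ to look one step ahead, obtain $\bsuc\,(f\,k) \leq \bsuc\,x$, invert successors with \cref{lem:suc-mono-inv}, and conclude with $\llimiting$. The only cosmetic difference is that the paper extracts $f\,k \leq \bsuc\,x$ from the hypothesis via the code case \eqref{eq:code-limit-suc}, whereas you derive $f\,(k+1) \leq \bsuc\,x$ directly by chaining $\lcocone$ (applied to reflexivity) with the hypothesis via $\ltrans$, thereby avoiding one invocation of the $\Code$ machinery.
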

\begin{proof}
	From \eqref{eq:code-limit-suc}, we have that $f\,k \leq \bsuc \, x$ for every $k$. But since $f$ is increasing, $\bsuc\,(f\,k) \leq f\,(k + 1) \leq \bsuc \, x$ for every $k$, hence by \cref{lem:suc-mono-inv} $f\,k \leq x$ for every $k$, and the result follows using the constructor $\llimiting$.
\end{proof}

\begingroup
\def\thetheorem{\ref{thm:general-notions}}
\begin{theorem}[for Brouwer trees]
	$\brouwer$ is a set,  and $<$ as well as $\leq$ are valued in propositions.
	Both $<$ and $\leq$ are transitive, $<$ is irreflexive, and $\leq$ is reflexive and antisymmetric.
\end{theorem}
\begin{proof}
	$\brouwer$ and $\leq$ are constructed to be a set and valued to be in propositions, respectively. $<$ is defined as an instance of $\leq$ and thus valued in propositions.
	Transitivity for both relations is given by $\ltrans$.
	
	Reflexivity of $\leq$ holds by easy induction.
	Irreflexivity follows directly from wellfoundedness, which for $<$ is given in \cref{thm:<-wellfounded} below.

	Antisymmetry of $\leq$ is shown with the characterisation of $\leq$ given by \cref{lem:brouwer-code}; we sketch how:
	Let $x, y$ with $x \leq y$ and $y \leq x$ be given.
	We do nested induction, i.e.\ induction on both $x$ and $y$. Since we are proving a proposition, we can disregard the cases for path constructors, giving us $9$ cases in total, many of which are duplicates.
	We discuss the two most interesting cases:
	\begin{itemize}
		\item $x \equiv \blimit \, f$ and $y \equiv \bsuc \, y'$: In that case, \cref{lem:lim-under-suc} and the assumed inequalities show $\bsuc \, y' \leq \blimit \,f \leq y'$ and thus $y' < y'$, contradicting the wellfoundedness of $<$ (\cref{thm:<-wellfounded}).
		\item $x \equiv \blimit \, f$ and $y \equiv \blimit \, g$: By \cref{lem:lim-under-lim-simulation}, $f$ and $g$ simulate each other. By the constructor $\bbisim$, $x = y$. \qedhere
	\end{itemize}
\end{proof}
\addtocounter{theorem}{-1}
\endgroup

\begingroup
\def\thetheorem{\ref{thm:<-extensional}}
\begin{theorem}[for Brouwer trees]
	The relations $<$ and $\leq$ on $\brouwer$ are
	extensional.
\end{theorem}
\begin{proof}
	Extensionality is trivial for a reflexive and antisymmetric relation, which covers $\leq$.
	For $<$,
	let $x$ and $y$ be two elements of $\brouwer$ with the same set of smaller elements.
	As in the above proof, we can disregard the path constructors and get $9$ cases. 
	If $x$ and $y$ are built of different constructors, it is easy to derive a contradiction. For example, in the case $x \equiv \blimit \, f$ and $y \equiv \bsuc \, y'$, we have $y' < y$ and thus $y' < \blimit \, f$.
	By \cref{lem:x-under-limit-means-x-under-element},
	there exists an $n$ such that $y' < f \, n$, which in turn implies $y < f \, (n+1)$ and thus $y < \blimit \, f$. By the assumed set of smaller elements, that means we have $y < y$, contradicting irreflexivity of $<$.
	
	The other interesting case, $x \equiv \blimit \, f$ and $y \equiv \blimit \, g$, is easy. For all $k : \N$, we have $f \, k < f \, (k+1) \leq \blimit \, f$ and thus $f \, k < \blimit \, g$; by the constructor $\llimiting$, this implies $\blimit \, f \leq \blimit \, g$.
	By the symmetric argument and by antisymmetry of $\leq$, it follows that $\blimit \, f = \blimit \, g$.
\end{proof}
\addtocounter{theorem}{-1}
\endgroup

\begingroup
\def\thetheorem{\ref{thm:<-wellfounded}}
\begin{theorem}[for Brouwer trees]
	The relation $<$ on $\brouwer$ is wellfounded.
\end{theorem}
\begin{proof}
	We need to prove that every $y : \brouwer$ is accessible.
	When doing induction on $y$, the cases of path constructors are automatic as we are proving a proposition.
	From the remaining constructors, we only show the hardest case, which is that $y \equiv \blimit \, f$.
	We have to prove that any given $x < \blimit \, f$ is accessible.
	By \cref{lem:x-under-limit-means-x-under-element}, there exists an $n$ such that $x < f\, n$, and the latter is accessible by the induction hypothesis.
\end{proof}	
\addtocounter{theorem}{-1}
\endgroup

\begingroup
\def\thetheorem{\ref{thm:all-satisfy-assumptions}}
\begin{theorem}[for Brouwer trees]
	$\brouwer$ and its relations satisfy:
	\begin{enumerate}[({A}1)]
		\item \label{item:assumption1C}
		$<$ is transitive and irreflexive;
		\item \label{item:assumption2C}
		$\leq$ is reflexive, transitive, and antisymmetric;
		\item \label{item:assumption3C}
		we have $(<) \subseteq (\leq)$ and $(< \circ \leq) \subseteq (<)$.
	\end{enumerate}	
\end{theorem}
\begin{proof}
	The first two properties have been shown above. $(<) \subseteq (\leq)$ follows easily from \cref{lem:brouwer-code} and $(< \circ \leq) \subseteq (<)$ by transitivity.
\end{proof}
\addtocounter{theorem}{-1}
\endgroup

\begingroup
\def\thetheorem{\ref{thm:all-zero}}
\begin{theorem}[for Brouwer trees]
	$\brouwer$ has a zero.
\end{theorem}
\begin{proof}
	The zero is given directly by the constructor $\bzero$, with $\lzero$ witnessing its property.
\end{proof}
\addtocounter{theorem}{-1}
\endgroup

\begingroup
\def\thetheorem{\ref{thm:strong-succs}}
	\begin{theorem}[for Brouwer trees]
	$\brouwer$ has strong successors,
	and
	the successor function is both $<$- and $\leq$-monotone.
\end{theorem}
\begin{proof}
	The function is given by the constructor $\bsuc$.
	It calculates successors by \cref{thm:calc-succ-characterisation} and by definition.
	To verify that it is a strong successor we need to show that $x < \bsuc \, b$ implies $x \leq b$.
	But $x < \bsuc \, b$ is defined to mean $\bsuc \, x \leq \bsuc \, b$ which, by \cref{lem:suc-mono-inv}, is indeed equivalent to $x \leq b$.	
\end{proof}
\addtocounter{theorem}{-1}
\endgroup

\begingroup
\def\thetheorem{\ref{thm:sups-lims}}
	\begin{theorem}[for Brouwer trees]
		$\brouwer$ has limits of increasing sequences.
	\end{theorem}
	\begin{proof}
		Directly given by the constructor $\blimit$ in conjunction with the constructors $\lcocone$ and $\llimiting$.
	\end{proof}
	\addtocounter{theorem}{-1}
	\endgroup

\begingroup
\def\thetheorem{\ref{thm:classification}}
\begin{theorem}[for Brouwer trees]
	$\brouwer$ has classification.
\end{theorem}
	\begin{proof}
	By \cref{thm:CFI-to-classification} and \cref{thm:classifyability-induction} below.
\end{proof}
\addtocounter{theorem}{-1}
\endgroup

\begingroup
\def\thetheorem{\ref{thm:classifyability-induction}}
\begin{theorem}[for Brouwer trees]
	$\brouwer$ satisfies classifiability induction.
\end{theorem}
\begin{proof}
	Since $\brouwer$ has zero, successor, and limits of increasing sequences, all given by the respective constructor, the special case of induction for $\brouwer$ where the goal is a proposition is exactly classifiability induction.
\end{proof}
\addtocounter{theorem}{-1}
\endgroup

\begingroup
\def\thetheorem{\ref{thm:all-arithmetic-operations}}
\begin{theorem}[for Brouwer trees]
	$\brouwer$ has addition, multiplication and exponentiation with every base, and they are all unique.
\end{theorem}
\begin{proof}
The standard arithmetic operations on Brouwer trees can be implemented with the usual strategy well-known in the functional programming community, i.e.\ by recursion on the second argument.
However, there are several additional difficulties which stem from the fact that our Brouwer trees enforce correctness.

Let us start with addition.
The obvious definition is
\begin{alignat*}{10}
&x &\; & + &\; & \bzero  &\quad & \defeq  &\quad & x \\
&x && + && \bsuc \, y  && \defeq  && \bsuc \, (x + y) \\
&x && + && \blimit \, f  && \defeq  && \blimit \, (\lambda k. x + f \, k).
\end{alignat*}
For this to work, we need to prove, mutually with the above definition, that the sequence $\lambda k. x + f\, k$ in the last line is still increasing, which follows from mutually proving that $+$ is monotone in the second argument, both with respect to $\leq$ and $<$.
We also need to show that bisimilar sequences $f$ and $g$ lead to bisimilar sequences $x + f\, k$ and $x + g\, k$.

The same difficulties occur for multiplication $(\cdot)$, where they are more serious:
If $f$ is increasing (with respect to $<$), then
$\lambda k. x + f \, k$ is not necessarily increasing, as $x$ could be $\bzero$.
What saves us is that it is \emph{decidable} whether $x$ is $0$ (cf.~\cref{lem:distincton-of-constructors}); and if it is, the correct definition is $x \cdot \blimit \, f \defeq \bzero$.
If $x$ is not $0$, then it is at least $1$ (another simple lemma for $\brouwer$), and the sequence \emph{is} increasing.
With the help of several lemmas that are all stated and proven mutually with the actual definition of $(\cdot)$, 
the mentioned decidability is the core ingredient which allows us to complete the construction.
Exponentation $x^y$ comes with similar caveats as multiplication, but works with the same strategy.

Our Agda formalisation contains proofs that addition, multiplication, and exponentation have the properties introduced in \cref{subsec:abstract-arithmetic}, and that they are all unique. 
\end{proof}
\addtocounter{theorem}{-1}
\endgroup

The formalisation contains several other results which are omitted in the paper. Define $\iota : \N \to \brouwer$ as the embedding of natural numbers into Brouwer trees, i.e.\ $\iota(0) \defeq \bzero$ and $\iota(n+1) \defeq \bsuc\,(\iota(n))$, and let $\omega \defeq \blimit\,\iota$. We then have:

\begin{lemma}
	\label{thm:additive-principal}
	Brouwer trees of the form $\omega^x$ are additive principal:
	\begin{enumerate}[(a)]
		\item if $a < \omega^x$ then $a + \omega^x = \omega^x$. \label{item:add-prin-1}
		\item if $a < \omega^x$ and $b < \omega^x$ then $a + b < \omega^x$. \label{item:add-prin-2}
	\end{enumerate}
\end{lemma}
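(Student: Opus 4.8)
The plan is to prove part~\ref{item:add-prin-2} first, and then deduce part~\ref{item:add-prin-1} from it. For part~\ref{item:add-prin-2}, I would proceed by classifiability induction (\cref{thm:brw-class-CFI}) on $x$, with motive $P(x) \defeq \forall a\, b.\, (a < \omega^x) \to (b < \omega^x) \to (a + b < \omega^x)$, which is a proposition since $<$ is valued in propositions. The three cases match the defining equations of exponentiation. If $\iszero(x)$, then $\omega^x = \bsuc\,\bzero$, so $a < \omega^x$ and $b < \omega^x$ force $a = b = \bzero$, whence $a + b = \bzero < \omega^x$. If $x$ is the strong successor of $x'$, then $\omega^x = \omega^{x'}\cdot\omega$, and since $\omega = \blimit\,\iota$ this is the limit of the sequence $\lambda k.\, \omega^{x'}\cdot k$ (increasing because $\omega^{x'} > \bzero$); by \cref{lem:x-under-limit-means-x-under-element} we obtain $m,n$ with $a < \omega^{x'}\cdot m$ and $b < \omega^{x'}\cdot n$, and using strict monotonicity of $+$ in its second argument, monotonicity of $+$ in its first argument, and left distributivity, I would chain $a + b < a + \omega^{x'}\cdot n \le \omega^{x'}\cdot m + \omega^{x'}\cdot n = \omega^{x'}\cdot(m+n) \le \omega^{x'}\cdot\omega = \omega^x$ (this case does not even use the induction hypothesis). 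Finally, if $x$ is the limit of an increasing $h$, then $\omega^x = \blimit(\lambda i.\, \omega^{h_i})$ (using $\neg\iszero(\omega)$ and strict monotonicity of $\omega^{(-)}$ to see the sequence is $<$-increasing); applying \cref{lem:x-under-limit-means-x-under-element} twice and setting $N \defeq \max(m,n)$ yields $a,b < \omega^{h_N}$, so the induction hypothesis $P(h_N)$ gives $a + b < \omega^{h_N} \le \omega^x$.

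For part~\ref{item:add-prin-1}, fix $a < \omega^x$. The inequality $\omega^x \le a + \omega^x$ is immediate from $\bzero \le a$, monotonicity of $+$ in its first argument, and the left-unit law $\bzero + \omega^x = \omega^x$. For the converse $a + \omega^x \le \omega^x$ I would classify $x$ (\cref{thm:brw-class-CFI}): the case $\iszero(x)$ is the direct computation above, while in the successor and limit cases $\omega^x$ is a limit $\blimit\,g$ of an increasing sequence with every $g(k) < \omega^x$ (the very sequences used in part~\ref{item:add-prin-2}). Then $a + \omega^x = \blimit(\lambda k.\, a + g(k))$, and part~\ref{item:add-prin-2} gives $a + g(k) < \omega^x$ for every $k$, so the constructor $\llimiting$ yields $a + \omega^x \le \omega^x$. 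Antisymmetry of $\leq$ (\cref{thm:brouwer-antisymmetric}) then gives $a + \omega^x = \omega^x$.

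The main obstacle I anticipate is the bookkeeping around first-argument monotonicity of $+$: addition on $\brouwer$ is strictly monotone only in its second argument (e.g.\ $\bsuc\,\bzero + \omega = \omega = \bzero + \omega$), so the successor case cannot simply add the two strict inequalities. The fix is to route the estimate through left distributivity so that the single strict step comes from the second argument, exactly as in the chain above. The other slightly delicate points are purely in verifying that the sequences witnessing $\omega^x$ as a limit are genuinely $<$-increasing, so that \cref{lem:x-under-limit-means-x-under-element} and $\llimiting$ apply; this reduces to $\omega^{x'} > \bzero$ together with strict monotonicity of multiplication and of $\omega^{(-)}$, which are standard facts about the arithmetic operations whose proofs live in the formalisation.
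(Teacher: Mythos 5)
Your proof is correct, but it inverts the paper's decomposition. The paper proves part~\ref{item:add-prin-1} first, by induction on $a$ (with \cref{lem:x-under-limit-means-x-under-element} as the crucial ingredient), and then obtains part~\ref{item:add-prin-2} as a one-line corollary: from $b < \omega^x$, strict monotonicity of $+$ in the second argument gives $a + b < a + \omega^x$, and part~\ref{item:add-prin-1} rewrites the right-hand side to $\omega^x$. You instead prove part~\ref{item:add-prin-2} directly by classifiability induction on the \emph{exponent} $x$ (\cref{thm:brw-class-CFI}), and then derive part~\ref{item:add-prin-1} by classifying $x$, writing $\omega^x$ as $\blimit\,g$ with each $g(k) < \omega^x$, computing $a + \blimit\,g = \blimit(\lambda k.\,a + g(k))$, applying part~\ref{item:add-prin-2} pointwise, and closing with $\llimiting$ and antisymmetry (\cref{thm:brouwer-antisymmetric}). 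Both arguments hinge on \cref{lem:x-under-limit-means-x-under-element} to extract a concrete index below a limit, and your handling of the delicate point --- that $+$ on $\brouwer$ is only \emph{weakly} monotone in its first argument, so the single strict step must be routed through the second argument via left distributivity, $a + b < a + \omega^{x'}\cdot n \le \omega^{x'}\cdot m + \omega^{x'}\cdot n = \omega^{x'}\cdot(m+n)$ --- is exactly right and shows you anticipated the trap correctly. What each approach buys: the paper's order localises all inductive work in part~\ref{item:add-prin-1} and makes part~\ref{item:add-prin-2} free, whereas yours makes part~\ref{item:add-prin-1} inductionless at the cost of a heavier part~\ref{item:add-prin-2}; your route also leans on more arithmetic infrastructure (left distributivity for finite multiples, positivity of $\omega^{x'}$, strict monotonicity of $\omega^{(-)}$, and the computation rules unfolding $\omega^{x'+1}$ and $\omega^{\blimit\,h}$ as limits of increasing sequences), all of which is indeed available in the formalisation, so nothing is missing --- it is simply a less economical but equally valid organisation of the same ingredients.
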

\begin{proof}
	For \eqref{item:add-prin-1}, it is enough to show $a + \omega^x \leq \omega^x$ by monotonicity of $+$. We prove this by induction on $a$, making crucial use of 
	\cref{lem:x-under-limit-means-x-under-element}.
	Statement \eqref{item:add-prin-2} is an easy corollary of \ref{item:add-prin-1}) and strict monotonicity of $+$ in the second argument.
\end{proof}

\begin{lemma}
	\label{thm:omega-preserves-finite}
	If $x > \bzero$ and $n : \N$, then $\iota(n) \cdot \omega^x = \omega^x$.
\end{lemma}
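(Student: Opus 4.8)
The plan is to induct on the \emph{exponent} $x$ rather than on the left factor, using the classifiability induction principle for $\brouwer$ from \cref{thm:brw-class-CFI}. First note that the statement is only sensible for $n \geq 1$: since multiplication recurses on its second argument, $\iota(n) \cdot \omega^x$ is ``$\omega^x$ copies of $\iota(n)$'', so for $n \equiv 0$ it degenerates to $\bzero$. I therefore fix some $n \geq 1$, which is exactly what makes $\bsuc\,\bzero \leq \iota(n)$ available below, and take as motive the proposition
\[ P(x) \defeq (\bzero < x) \to \iota(n) \cdot \omega^x = \omega^x, \]
which is a proposition because $\brouwer$ is a set. The zero case of classifiability induction is then vacuous, since $\iszero(x)$ contradicts $\bzero < x$.

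In the successor case, classifiability induction supplies $x'$ with $\isstrongsucof{x}{x'}$ together with the hypothesis $P(x')$. The successor equation for exponentiation (\cref{def:have-exponentation}) gives $\omega^x = \omega^{x'} \cdot \omega$, so by associativity of multiplication $\iota(n) \cdot \omega^x = (\iota(n) \cdot \omega^{x'}) \cdot \omega$. If $\bzero < x'$, the hypothesis $P(x')$ rewrites $\iota(n) \cdot \omega^{x'}$ to $\omega^{x'}$ and we are done. If instead $x' \equiv \bzero$, then $\omega^{x'} = \bsuc\,\bzero$ and the goal reduces to $\iota(n) \cdot \omega = \omega$; this follows by unfolding $\omega \equiv \blimit\,\iota$ and the limit equation for multiplication, giving $\iota(n) \cdot \omega = \sup_k \iota(n) \cdot \iota(k)$, which equals $\omega$ since $\iota(n)\cdot\iota(k) = \iota(n\cdot k)$ and the multiples $n\cdot k$ are cofinal in $\Nat$ for $n \geq 1$.

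In the limit case, classifiability induction supplies an increasing $g$ with $\islimof{x}{g}$ and the hypothesis $\forall k.\,P(g_k)$. As $\bzero < \omega$, the limit equation for exponentiation gives $\omega^x = \sup_k \omega^{g_k}$, and this sequence is increasing because $\omega^{(-)}$ is strictly monotone in the exponent; hence $\omega^x$ is the limit of $(\omega^{g_k})_k$, and the limit equation for multiplication yields $\iota(n) \cdot \omega^x = \sup_k \iota(n) \cdot \omega^{g_k}$. Since $g$ is increasing we have $\bzero < g_k$ for every $k \geq 1$, so $P(g_k)$ rewrites each such term to $\omega^{g_k}$; the finitely many remaining terms (those with $g_k \equiv \bzero$, of value $\iota(n) \leq \omega^x$) do not raise the supremum. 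Therefore $\iota(n) \cdot \omega^x = \sup_k \omega^{g_k} = \omega^x$.

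The main obstacle is choosing the right induction. The superficially more natural route --- fixing a fundamental sequence $\omega^x = \blimit\,f$, writing $\iota(n) \cdot \omega^x = \sup_k \iota(n) \cdot f_k$, and establishing the auxiliary claim $a < \omega^x \to \iota(n) \cdot a < \omega^x$ by transfinite induction on $a$ using additive principality (\cref{thm:additive-principal}) --- stalls at \emph{limit} values of $a$: additive principality delivers the required strict bound only in the successor step, whereas the limit step yields merely $\leq$, and closing this gap leads to circularity. Inducting on the exponent avoids the issue entirely. The remaining ingredients --- associativity and strict monotonicity of multiplication, strict monotonicity of $\omega^{(-)}$, and the cofinality computation for $\iota(n) \cdot \omega = \omega$ --- are routine, and the only genuine case distinctions are the degenerate sub-cases $x' \equiv \bzero$ and $g_k \equiv \bzero$.
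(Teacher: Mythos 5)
The paper itself prints no argument for this lemma (it is stated with an immediate \qed and deferred to the Agda formalisation), so there is no textual proof to compare against; judged on its own, your proof is correct, and induction on the \emph{exponent} $x$ --- via classifiability induction (\cref{thm:brw-class-CFI}), which here coincides with the propositional induction principle of $\brouwer$ since by \cref{thm:brouwer-has-it-all} the constructors realise zero, strong successors, and limits --- is exactly the route one would expect the formalisation to take. Your opening observation is a genuine and worthwhile catch: with $\iota(0) = \bzero$ and $\bzero \cdot y = \bzero$ (forced by the definition of multiplication discussed in \cref{sec:HIIT}), the statement as printed fails at $n \equiv 0$ unless $\iota$ is read as $n \mapsto n+1$; restricting to $n \geq 1$ is what any correct proof must do, and you invoke it precisely where it is load-bearing (non-zeroness of $\iota(n)$ for the limit clause of multiplication, $\bsuc\,\bzero \leq \iota(n)$, and cofinality of the multiples $nk$). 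The case split on $x' \equiv \bzero$ in the successor step is legitimate because zeroness of a Brouwer tree is decidable (\cref{subsec:distincton-of-constructors}) and a non-zero tree is at least $\bsuc\,\bzero$, both facts the paper records.

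The steps you leave implicit are all routine and available in the formalisation: associativity of $\cdot$, $c \cdot \bsuc\,\bzero = c$, $\iota(n)\cdot\iota(k) = \iota(nk)$, strict monotonicity of $\omega^{(-)}$ (used in the paper's proof that $\CtoB(a) < \varepsilon_0$), and mixed transitivity to get $\bzero < g_k$ for $k \geq 1$. The one spot where the prose is looser than it should be is the claim that the possible $k \equiv 0$ term ``does not raise the supremum'': the clean way to discharge it is to exhibit a bisimulation between the increasing sequences $(\iota(n)\cdot\omega^{g_k})_k$ and $(\omega^{g_k})_k$ --- in one direction $\iota(n)\cdot\omega^{g_k} \leq \iota(n)\cdot\omega^{g_{k+1}} = \omega^{g_{k+1}}$ by $P(g_{k+1})$ and $\leq$-monotonicity of $\iota(n)\cdot(-)$ in its second argument, in the other $\omega^{g_k} \leq \iota(n)\cdot\omega^{g_{k+1}}$ likewise --- and then apply the constructor $\bbisim$; this avoids any appeal to monotonicity in the \emph{left} factor, which your phrase ``of value $\iota(n) \leq \omega^x$'' hints at but which is never needed.
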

\begin{proof}
  By induction on $x$. The base case $x = \bzero$ is impossible by
  assumption. If $x = \bsuc\, y$, we can decide if $y = \bzero$ or
  not. If it is, i.e.\ if $x = \bsuc\,\bzero$, we prove
  $\iota(n) \cdot \omega= \omega$ by induction on $n$, and otherwise
  the goal follows straightforwardly from the induction hypothesis.
  In the limit case $x = \blimit\,f$, we use that
  $\blimit\,f = \blimit\,(\lambda k . f\,(k+1))$, and that $f\,(k+1) > \bzero$ for every increasing sequence $f$, in order for the induction hypothesis to apply.
\end{proof}

%%% Local Variables:
%%% mode: latex
%%% TeX-master: "root_arxiv"
%%% End:

\section{Proofs about Extensional Wellfounded Orders}
\label{app:orders}

\begingroup
\def\thetheorem{\ref{thm:general-notions}}
\begin{theorem}[for extensional wellfounded orders]
	$\bookord$ is a set and the relations $<$ and $\leq$ are valued in propositions.
	In addition, $<$ and $\leq$ are transitive, $<$ is irreflexive, and $\leq$ is reflexive and antisymmetric.
	$<$ is trichotomous iff $\leq$ is connex iff $\LEM$ holds.
\end{theorem}
\begin{proof}
	All statements follow either from \cite[Thm~10.3.20]{hott-book} or are included in
	\cref{thm:ord-everything-undecidable} below.
\end{proof}
\addtocounter{theorem}{-1}
\endgroup

\begingroup
\def\thetheorem{\ref{thm:<-extensional}}
\begin{theorem}[for extensional wellfounded orders]
	The relations $<$ and $\leq$ on $\bookord$ are extensional.
\end{theorem}
\begin{proof}
	Extensionality is trivial for a reflexive and antisymmetric relation, which covers $\leq$.
	Extensionality of $<$ is shown in \cite[Thm~10.3.20]{hott-book}.
\end{proof}
\addtocounter{theorem}{-1}
\endgroup

\begingroup
\def\thetheorem{\ref{thm:<-wellfounded}}
\begin{theorem}[for extensional wellfounded orders]
	The relation $<$ on $\bookord$ is wellfounded.
\end{theorem}
\begin{proof}
	See \cite[Thm~10.3.20]{hott-book}.
\end{proof}
\addtocounter{theorem}{-1}
\endgroup

\begingroup
\def\thetheorem{\ref{thm:all-satisfy-assumptions}}
\begin{theorem}[for extensional wellfounded orders]
	$\bookord$ and its relations satisfy:
	\begin{enumerate}[({A}1)]
		\item \label{item:assumption1ord}
		$<$ is transitive and irreflexive;
		\item \label{item:assumption2ord}
		$\leq$ is reflexive, transitive, and antisymmetric;
		\item \label{item:assumption3ord}
		we have $(<) \subseteq (\leq)$ and $(< \circ \leq) \subseteq (<)$.
	\end{enumerate}	
\end{theorem}
\begin{proof}
	The first two properties follow from $\bookord$ being an ordinal \cite[Thm~10.3.20]{hott-book}.
	The first part of \ref{item:assumption3ord} is projection, the second holds by composition of functions.
\end{proof}
\addtocounter{theorem}{-1}
\endgroup

The property symmetric to $(< \circ \leq) \subseteq (<)$ fails for $\bookord$:

\begin{lemma}
	Let $A$, $B$, $C : \bookord$. The statement that $A \leq B$ and $B < C$ implies $A < C$ holds if and only if excluded middle $\LEM$ does.
\end{lemma}
\begin{proof}
	If $A \simeq B_{\slash b}$ and $g : B \leq C$ then $A \simeq
	C_{\slash g(b)}$.  Assuming $\LEM$ and $f : A \leq B$, $g : B <
	C$, there is a minimal $c : C$ not in the image of $g \circ
	f$ by \cite[Theorem 10.4.3]{hott-book}, and $A \simeq C_{\slash
		c}$.  Conversely, let $P$ be a proposition; it is an ordinal with the empty
	order. We have $\Unit \leq \Unit \dissum P$ and $\Unit \dissum P < \Unit \dissum P \dissum
	\Unit$, so by assumption $\Unit < \Unit \dissum P \dissum
	\Unit$. Now observe which component of the sum the bound is from: this shows if $P$ holds or not.
\end{proof}

\begingroup
\def\thetheorem{\ref{thm:all-zero}}
\begin{theorem}[for extensional wellfounded orders]
	$\bookord$ has a zero.
\end{theorem}
\begin{proof}
	Zero is given as the empty type $\Empty$.
\end{proof}
\addtocounter{theorem}{-1}
\endgroup

\begingroup
\def\thetheorem{\ref{thm:strong-succs}}
\begin{theorem}[for extensional wellfounded orders]
	$\bookord$ has strong successors, with the successor of $A$ given as $\osuc{A}$, but monotonicity with respect to either $<$ or $\leq$ is equivalent to $\LEM$.
\end{theorem}
\begin{proof}
%	The strong successor of $A$ is $\osuc{A}$. 
	The definition of a bounded simulation implies $(X < Y) \leftrightarrow (X \dissum \Unit \leq Y)$, 
	making
	\cref{thm:calc-succ-characterisation}
	applicable.
	For the rest of the statement, see \cref{thm:ord-everything-undecidable} below.
\end{proof}
\addtocounter{theorem}{-1}
\endgroup

\begingroup
\def\thetheorem{\ref{thm:sups-lims}}
\begin{theorem}[for extensional wellfounded orders]
If $F : \N \to \bookord$ is an increasing sequence of ordinals, then its limit $\osup{F}$ is the quotient $(\Sigma n : \N.F n)/\sim$, where $(n, y) \sim (n', y')$ if and only if $(F n)_{\slash y} \simeq (F n')_{\slash y'}$, with $[(n, y)] \prec [(n',y')]$ if $(F n)_{\slash y} < (F n')_{\slash y'}$.
If $F$ is only weakly increasing (i.e.\ increasing with respect to $\leq$), the same formula defines the supremum of $F$.
\end{theorem}
\begin{proof}
The construction of limits/suprema is from \cite[Lem~10.3.22]{hott-book}, where it is shown that the canonical function $l_n : F n \to \osup F$ is a simulation.
A family $g_n : F n \leq X$ readily gives a function $(\Sigma n:\N. Fn) \to X$, which we extend to the quotient. Given $(n,y) \sim (n', y')$, let WLOG $n \leq n'$; hence there is $f : F n \leq F n'$. The condition $(F n)_{\slash y} \simeq (F n')_{\slash y'}$ implies $f \, y = y'$.
As simulations are unique, we have $l_{n} = l_{n'} \circ f$, which means that $(n,y)$ and $(n',y')$ are indeed mapped to equal elements in $\osup F$, giving us a map $g : \osup F \to X$.
As every $g_n$ is a simulation, so is $g$.
\end{proof}
\addtocounter{theorem}{-1}
\endgroup

Note that the above proof explicitly uses that $F$ is (at least weakly) increasing; we do not think that the more general construction in \cite[Lem~10.3.22]{hott-book}
can be shown constructively to give suprema in our sense.
%However, \emph{weakly increasing} $F0 \leq F1 \leq \ldots$ suffices, which is important for the next theorem.

\begingroup
\def\thetheorem{\ref{thm:classification}}
\begin{theorem}[for extensional wellfounded orders]
	If $\bookord$ has classification, the law of excluded middle is derivable.
\end{theorem}
\begin{proof}
	See \cref{thm:ord-everything-undecidable}.
\end{proof}
\addtocounter{theorem}{-1}
\endgroup

\begingroup
\def\thetheorem{\ref{thm:classifyability-induction}}
\begin{theorem}[for extensional wellfounded orders]
	If $\bookord$ satisfies classifiability induction, we can derive the law of excluded middle.
\end{theorem}
\begin{proof}
	See \cref{thm:ord-everything-undecidable}.
\end{proof}
\addtocounter{theorem}{-1}
\endgroup

\begingroup
\def\thetheorem{\ref{thm:all-arithmetic-operations}}
\begin{theorem}[for extensional wellfounded orders]
	$\bookord$ has addition given by $A + B = A \dissum B$, and multiplication given by $A \cdot B = A \times B$, with the order reverse lexicographic, i.e.\ $(x, y) \prec (x' , y')$ is defined to be $y \prec_B y' \dissum (y = y' \times x \prec_A x')$.
\end{theorem}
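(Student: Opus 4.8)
The plan is to verify the three-clause axiomatic specifications of Definitions~\ref{def:have-addition} and~\ref{def:have-multiplication} for the proposed operations $A + B = A \dissum B$ and $A \cdot B = A \times B$. For each operation we must check the zero case, the successor case, and the limit case, using the concrete descriptions of zero, strong successors, and limits supplied by \cref{thm:bookord-zersuclim}. Throughout I would work up to the equality of ordinals given by equivalence of extensional wellfounded orders, which by univalence (or by \cite[Theorem 10.3.10]{hott-book}, since $\bookord$ is a set) suffices for the required identity types.

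First I would treat addition. The zero axiom asks that $\iszero(B) \to A + B = A$; since zero is $\Empty$ by \cref{thm:bookord-zersuclim}, and $\Empty$ is an initial segment giving an extensional order, the equivalence $A \dissum \Empty \simeq A$ is immediate and order-preserving. For the successor axiom I would use that the strong successor of $B$ is $\osuc B = B \dissum \Unit$, so that $A + \osuc B = A \dissum (B \dissum \Unit) \simeq (A \dissum B) \dissum \Unit = \osuc{(A + B)}$, which is exactly the strong successor of $A + B$; here I would invoke the uniqueness of strong successors (the proof accompanying \cref{lem:only-one-out-of-three}) so that matching $\osuc{(A+B)}$ discharges the $\issucof{d}{(c + b)}$ premise. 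For the limit axiom, given an increasing $F$ with limit $\osup F$, I must show that $A \dissum \osup F$ is the supremum of $\lambda i.\, A \dissum F_i$. Monotonicity of $A \dissum (-)$ gives each $A \dissum F_i \leq A \dissum \osup F$, and leastness follows because any common upper bound $X$ restricts to an upper bound of the $F_i$ on the right summand while accommodating $A$ on the left; here the explicit quotient description of $\osup F$ from \cref{thm:bookord-zersuclim} lets me build the required simulation out of $\osup F$.

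Next I would treat multiplication $A \cdot B = A \times B$ with the stated reverse lexicographic order. The zero axiom $\iszero(B) \to A \cdot B = B$ holds because $A \times \Empty \simeq \Empty$. The successor axiom requires $A \cdot \osuc B = A \cdot B + A$, i.e.\ $A \times (B \dissum \Unit) \simeq (A \times B) \dissum A$; this is the distribution of product over sum, and I would check that the reverse lexicographic order on the left matches the sum order (with each $A\times B$ block below the final copy of $A$) on the right. The limit axiom asks that $A \cdot \osup F$ be the supremum of $\lambda i.\, A \cdot F_i$, which I would establish just as in the addition case, using monotonicity of $A \times (-)$ and the quotient description of the limit, together with the remark following \cref{thm:bookord-zersuclim} that weakly increasing sequences suffice --- this matters because $\lambda i.\, A \cdot F_i$ need only be weakly increasing.

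The main obstacle I expect is the limit (supremum) cases, for both operations: I must produce genuine \emph{least} upper bounds, and the only handle on $\osup F$ is the quotient $(\Sigma n.\,F n)/\sim$ from \cref{thm:bookord-zersuclim}. Concretely, verifying that $A \dissum \osup F$ and $A \times \osup F$ again have the supremum universal property means constructing simulations into an arbitrary bound $X$ and checking they respect the quotient relation $\sim$ and the orders, which is where the bulk of the bookkeeping lives. A subtlety worth flagging is that $\lambda i.\, c + f_i$ and $\lambda i.\, c \cdot f_i$ are in general only weakly increasing rather than strictly increasing; this is why the axioms in \cref{def:have-addition,def:have-multiplication} phrase the limit clause through $\issupof{b}{(\cdots)}$ (suprema) rather than limits, and why the closing remark before the theorem, that weakly increasing sequences still admit the construction of \cref{thm:bookord-zersuclim}, is exactly the lemma I would lean on to carry these cases through.
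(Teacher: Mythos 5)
Your proposal is correct and follows essentially the same route as the paper's (very terse) proof: verify the three axiom clauses against the concrete zero, successor, and quotient-supremum representations of \cref{thm:bookord-zersuclim}, with the crux being exactly what you flag --- $\leq$-monotonicity of adding or multiplying a constant \emph{on the left}, and the fact that the supremum construction applies to merely \emph{weakly} increasing sequences, since $\lambda i.\, c \cdot f_i$ need not be strictly increasing. The only quibble is your phrase that a bound $X$ ``restricts to an upper bound of the $F_i$ on the right summand'' ($g_i \circ \inr$ is not itself a simulation when $A$ is inhabited), but the construction you then describe --- building the map out of the quotient and checking compatibility with $\sim$ via uniqueness of simulations --- is the correct one and matches the paper's appeal to its formalisation.
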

\begin{proof}
	The key observation is that a sequence of simulations $F0 \leq F1 \leq F2 \leq \ldots$ is preserved by adding or multiplying a constant \emph{on the left}, i.e.\ we have $C \cdot F0 \leq C \cdot F1 \leq C \cdot F2$ (but note that adding a constant on the right fails, see \cref{thm:ord-everything-undecidable} below).
	This allows us to use the explicit representation of suprema from \cref{thm:sups-lims}.
\end{proof}
\addtocounter{theorem}{-1}
\endgroup

Many constructions that we have performed for $\cnf$ and $\brouwer$ are not possible for $\bookord$, at least not constructively:

\begin{theorem} \label{thm:ord-everything-undecidable} Each of the
  following statements on its own implies the law of excluded middle
  ($\LEM$):
  \begin{enumerate}[(a)]
    \item The successor $(\blank \dissum \Unit)$
          is $\leq$-monotone. \label{item:leq-mon}
    \item The successor $(\blank \dissum \Unit)$ is $<$-monotone. \label{item:<-mon}
    \item $<$ is trichotomous, i.e.\ $(X < Y) \dissum (X = Y) \dissum (X > Y)$. \label{item:<-tricho}
    \item $\leq$ is connex, i.e.\ $(X \leq Y) \dissum (X \geq Y)$. \label{item:leq-connex}
    \item $\bookord$ satisfies classifiability induction. \label{item:ord-CFI}
    \item $\bookord$ has classification. \label{item:ord-class}
  \end{enumerate}
  Assuming $\LEM$, the first four statements hold.
\end{theorem}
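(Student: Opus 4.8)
The plan is to reduce every one of the six implications to a single scheme: encode an arbitrary proposition $P$ as the ordinal $(P, \emptyset)$ --- the empty relation on a proposition is vacuously wellfounded, extensional and transitive --- assemble a small ordinal out of $P$, $\Unit$ and $\Empty$ using the join $\dissum$, and then recover $P \dissum \neg P$ by inspecting into which summand a certain bound (or a certain classification) falls. Because the sum type has decidable injections, and because initial segments of a join are computed componentwise, $(A \dissum B)_{\slash \inl a} \simeq A_{\slash a}$ and $(A \dissum B)_{\slash \inr b} \simeq A \dissum B_{\slash b}$, this inspection is constructive. Note that $P \dissum \neg P$ is itself a proposition, so there is no obstruction to eliminating out of the propositions $X < Y$ (whose bound we use as data) or the classification type \eqref{eq:contractible-for-classification}. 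Throughout I would use the equivalence $(X < Y) \leftrightarrow (X \dissum \Unit \leq Y)$ from the proof of \cref{thm:bookord-zersuclim}, uniqueness of simulations (antisymmetry), and uniqueness of classification (\cref{lem:only-one-out-of-three}).

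For the two monotonicity statements \ref{item:leq-mon} and \ref{item:<-mon} (concerning the successor $\osuc{(\blank)}$), I would first reduce both to the single fact $\Unit < \Unit \dissum P \dissum \Unit$. For \ref{item:<-mon} this is immediate: $\Empty < \Unit \dissum P$ holds with bound $\inl \ast$, so $<$-monotonicity gives $\osuc{\Empty} < \osuc{(\Unit \dissum P)}$, i.e.\ $\Unit < \Unit \dissum P \dissum \Unit$. For \ref{item:leq-mon}, the equivalence above turns $\leq$-monotonicity into the statement $A \leq B \to A < \osuc{B}$; instantiating at $A \equiv \Unit$, $B \equiv \Unit \dissum P$ (using $\Unit \leq \Unit \dissum P$) gives the same conclusion. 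Now, given a bound $y$ for $\Unit < \Unit \dissum P \dissum \Unit$, I case-split on its summand: if $y$ is the top $\Unit$ the initial segment is $\Unit \dissum P \simeq \Unit$, whose injectivity forces $\neg P$; if $y$ lies in the middle $P$-component it exhibits a $p : P$; and if $y$ is the bottom the initial segment is $\Empty \not\simeq \Unit$, which is absurd. For trichotomy \ref{item:<-tricho} I would compare $\Unit$ with $\Unit \dissum P$: the case $\Unit < \Unit \dissum P$ yields a $p : P$, the case $\Unit = \Unit \dissum P$ yields $\neg P$, and $\Unit > \Unit \dissum P$ is impossible. For connex \ref{item:leq-connex} the naive comparisons leave one direction freely provable, so I instead compare $\Unit \dissum \Unit$ with $\Unit \dissum P \dissum \Unit$, a pair for which \emph{neither} direction is provable without deciding $P$: if $\Unit \dissum P \dissum \Unit \leq \Unit \dissum \Unit$ then assuming $p : P$ a $3$-chain would simulate into a $2$-chain, which is absurd, hence $\neg P$; and if $\Unit \dissum \Unit \leq \Unit \dissum P \dissum \Unit$, inspecting the image of the top of $\Unit \dissum \Unit$ decides $P$ exactly as in the monotonicity case.

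For classification \ref{item:ord-class} I would classify the ordinal $\omega \dissum P$, writing $\omega$ for the ordinal $(\Nat, <)$. It is never zero, since it is inhabited. If it is a strong successor, then by \cref{thm:bookord-zersuclim} it equals some $\osuc{B} = B \dissum \Unit$ and so has a greatest element; transporting that element back to $\omega \dissum P$ and inspecting its summand, it cannot lie in the $\omega$-part (every $\inl n$ is exceeded by $\inl(n+1)$), so it lies in the $P$-part and yields a $p : P$. If instead it is a limit, then since $p : P$ would make $\omega \dissum P \simeq \omega \dissum \Unit = \osuc{\omega}$, we have $P \to \isstrongsuc(\omega \dissum P)$, so the exclusivity of the classification (\cref{lem:only-one-out-of-three}) gives $\neg P$. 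Thus classification decides $P$. Statement \ref{item:ord-CFI} implies \ref{item:ord-class} by \cref{thm:CFI-to-classification}, so it too implies $\LEM$.

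Finally, for the converse I assume $\LEM$ and prove \ref{item:leq-mon}--\ref{item:leq-connex}. Under $\LEM$ any two ordinals are comparable and in fact trichotomous (the classical comparability theorem, \cite[Theorem 10.4.3]{hott-book}), giving \ref{item:<-tricho} and, since both $<$ and $=$ imply $\leq$, also \ref{item:leq-connex}. For \ref{item:<-mon}, $A < B$ exhibits $A \simeq B_{\slash b}$; the subset of $\osuc{B}$ strictly above $\inl b$ is inhabited (it contains the new top), so under $\LEM$ it has a least element $c$ by wellfoundedness, and $(\osuc{B})_{\slash c} \simeq B_{\slash b} \dissum \Unit \simeq \osuc{A}$ witnesses $\osuc{A} < \osuc{B}$. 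Statement \ref{item:leq-mon} then follows from \ref{item:<-mon} and trichotomy, splitting $A \leq B$ into $A = B$ (where $\osuc{A} = \osuc{B}$) or $A < B$ (where \ref{item:<-mon} applies). I expect the main obstacles to be exactly the two places where no free choice is available: selecting, for connex, a genuinely incomparable pair so that \emph{both} directions carry information, and, in the converse, extracting the least element strictly above $\inl b$, which is the step that genuinely forces the use of $\LEM$.
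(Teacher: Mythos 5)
Your proposal is correct, and while it draws on the same basic toolkit as the paper --- taboo ordinals built from a proposition $P$ via joins of $P$, $\Unit$, $\Empty$, the equivalence $(X < Y) \leftrightarrow (\osuc{X} \leq Y)$, and uniqueness of classification --- the decomposition differs in several places. The paper proves the chains $\LEM \Rightarrow \eqref{item:leq-mon} \Rightarrow \eqref{item:<-mon} \Rightarrow \LEM$ and $\LEM \Rightarrow \eqref{item:<-tricho} \Rightarrow \eqref{item:leq-connex} \Rightarrow \LEM$, so only \eqref{item:<-mon} and \eqref{item:leq-connex} need direct taboo arguments; you instead derive $\LEM$ directly from each of \eqref{item:leq-mon}--\eqref{item:leq-connex}, which costs you a separate (but valid) argument per item and buys independence of the items from one another. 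Your reduction of both monotonicity statements to the single instance $\Unit < \Unit \dissum P \dissum \Unit$, decided by casing on the summand of the bound, is essentially the paper's computation (the paper cases on where the simulation sends $\inr(\star)$ in $\Empty \dissum \Unit < \Unit \dissum P \dissum \Unit$ --- the same ordinal). For \eqref{item:leq-connex} the paper's pair $(P \dissum P, \Unit)$ is slicker than your $(\Unit \dissum \Unit,\, \Unit \dissum P \dissum \Unit)$, though your worry about ``one direction freely provable'' is well-founded and both pairs avoid it; likewise for \eqref{item:ord-class} the paper simply classifies the ordinal $P$ itself (zero gives $\neg P$, successor gives $P$, limit is impossible for a proposition), which is a one-liner compared to your correct but heavier classification of $\omega \dissum P$ using \cref{thm:bookord-zersuclim} and \cref{lem:only-one-out-of-three}. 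In the converse direction the routes genuinely diverge: the paper proves $\LEM \Rightarrow \eqref{item:leq-mon}$ directly (extending $f : A \leq B$ by the minimal element of $\osuc{B}$ not in the image of $f$, via \cite[Theorem 10.4.3]{hott-book}) and gets \eqref{item:<-mon} from it, whereas you prove $\LEM \Rightarrow \eqref{item:<-mon}$ directly (least element of $\osuc{B}$ strictly above $\inl(b)$, with $(\osuc{B})_{\slash c} \simeq \osuc{(B_{\slash b})}$) and recover \eqref{item:leq-mon} from \eqref{item:<-mon} plus trichotomy by splitting $A \leq B$ into $A = B$ or $A < B$ --- a nice observation the paper does not make, and legitimate since mere existence of the least element suffices because $X < Y$ and $X \leq Y$ are propositions. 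One small slip: you cite \cite[Theorem 10.4.3]{hott-book} for trichotomy, but in the HoTT book trichotomy under $\LEM$ is Theorem 10.4.1; Theorem 10.4.3 is the minimal-element result that your $\LEM \Rightarrow \eqref{item:<-mon}$ step actually uses.
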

\begin{proof}
    We first show the chain $\LEM \Rightarrow \eqref{item:leq-mon} \Rightarrow \eqref{item:<-mon} \Rightarrow \LEM$.

    $\LEM \Rightarrow \eqref{item:leq-mon}$:
    Let $f: A \leq B$.
    Using $\LEM$, there is a minimal  $b : B \dissum \Unit$ which is not in the image of $f$~\cite[Theorem 10.4.3]{hott-book}.
    The simulation $A \dissum \Unit \leq B \dissum \Unit$ is given by $f \dissum b$.

    $\eqref{item:leq-mon} \Rightarrow \eqref{item:<-mon}$:
    Assume we have $A < B$. As in \cref{thm:strong-succs}, this is equivalent to $A \dissum \Unit \leq B$.
    Assuming $\simplesuc$ is $\leq$-monotone, we get $A \dissum \Bool \leq B \dissum \Unit$, and applying \cref{thm:calc-succ-characterisation} once more, this is equivalent to $A \dissum \Unit < B \dissum \Unit$.

    $\eqref{item:<-mon} \Rightarrow \LEM$:
    Assume $P$ is a proposition. We have $\Empty < \Unit \dissum P$.
    If $\simplesuc$ is $<$-monotone, then we get $\Empty \dissum \Unit < \Unit \dissum P \dissum \Unit$. Observing if the simulation $f$ sends $\mathsf{inr}(\star)$ to the $P$ summand or not, we decide $P  \dissum  \neg P$.

    Next, we show the chain $\LEM \Rightarrow \eqref{item:<-tricho} \Rightarrow \eqref{item:leq-connex} \Rightarrow \LEM$, where the first implication is given by \cite[Thm 10.4.1]{hott-book}.

    $\eqref{item:<-tricho} \Rightarrow \eqref{item:leq-connex}$: Each of the three cases of \eqref{item:<-tricho} gives us either $X \leq Y$ or $X \geq Y$ or both.

    $\eqref{item:leq-connex} \Rightarrow \LEM$: Given $P : \Prop$, we compare $P \dissum P$ with $\Unit$.
    If $P \dissum P \leq \Unit$ then $\neg P$, while $\Unit \leq P \dissum P$ implies $P$.

    Finally, we show $\eqref{item:ord-CFI} \Rightarrow \eqref{item:ord-class} \Rightarrow \LEM$. The first implication is clear.
    For the second implication, observe that a classifiable proposition is either zero ($\Empty$) or a successor ($X \dissum \Unit$), as a limit is never a proposition. In both cases, we are done.
\end{proof}

%%% Local Variables:
%%% mode: latex
%%% TeX-master: "root_arxiv"
%%% End:

% !TEX root = root_mfcs.tex

\section{Proofs for \texorpdfstring{\cref{sec:interpretations}}{Section 5}}
\label{app:interpretations}

\begingroup
\def\thetheorem{\ref{thm:CtoB-reflects}}
\begin{theorem}
	The function $\CtoB$ preserves and reflects $<$ and $\leq$, i.e., $a < b \iff \CtoB(a) < \CtoB(b)$, and  $a \leq b \iff \CtoB(a) \leq \CtoB(b)$.
\end{theorem}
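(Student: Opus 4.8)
The plan is to prove the four implications by exploiting structure that $\CtoB$ inherits from the arithmetic on $\brouwer$. Because both $<$ and $\leq$ on $\cnf$ are decidable and trichotomous (\cref{lem:cnf-everything-decidable}), and because $<$ is irreflexive and antisymmetric on both sides, it suffices to prove just \emph{one} direction of the strict case and then derive everything else. Concretely, I would first prove the forward strict implication $a < b \to \CtoB(a) < \CtoB(b)$ (preservation of $<$) by well-founded/transfinite induction on the trees, and then obtain reflection of $<$, and both directions for $\leq$, essentially for free by a standard trichotomy argument: if $\CtoB(a) < \CtoB(b)$ but $\neg(a<b)$, then by connexity either $a = b$ or $b < a$; the first contradicts irreflexivity of $<$ on $\brouwer$ (\cref{lem:no-infinite-sequence}, since $\CtoB(a)=\CtoB(b)$), and the second gives $\CtoB(b) < \CtoB(a)$ by preservation, contradicting antisymmetry. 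The $\leq$ statements then follow since $\leq$ on $\cnf$ is the reflexive closure of $<$ and $\CtoB$ respects $=$ definitionally.

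The heart of the argument is therefore preservation of $<$. I would do induction following the lexicographic generation of $<$ in \eqref{eq:tree-inequ}. The base case $\tz < \tom a b$ reduces to showing $\bzero < \omega^{\CtoB(a)} + \CtoB(b)$, which holds because $\omega^{x}$ is always strictly positive. The two inductive cases are $a < c \to \tom a b < \tom c d$ and $b < d \to \tom a b < \tom a d$. For the second case, with equal first components, the goal $\omega^{\CtoB(a)} + \CtoB(b) < \omega^{\CtoB(a)} + \CtoB(d)$ follows directly from strict monotonicity of $+$ in its right argument together with the induction hypothesis $\CtoB(b) < \CtoB(d)$. The first case is the genuinely hard one: from $\CtoB(a) < \CtoB(c)$ I must deduce $\omega^{\CtoB(a)} + \CtoB(b) < \omega^{\CtoB(c)} + \CtoB(d)$.

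The main obstacle is precisely this first case, and the key is that $\omega^{x}$ is \emph{additive principal} (\cref{thm:additive-principal}) and strictly monotone in $x$. I expect the argument to run as follows: since $a < c$, the induction hypothesis gives $\CtoB(a) < \CtoB(c)$, so $\CtoB(a) \dissum \Unit \leq \CtoB(c)$ and hence $\omega^{\CtoB(a)} \cdot \text{(something)} $ ideas are available, but the clean route is to show $\omega^{\CtoB(a)} + \CtoB(b) < \omega^{\CtoB(c)}$. Here I would use that $\CtoB(b)$ is built from exponentials $\omega^{(-)}$ with exponents that, by the $\mathsf{isCNF}$ condition $\fst(b) \leq a$, are all at most $\CtoB(a) < \CtoB(c)$; combined with \cref{thm:additive-principal}\ref{item:add-prin-2}, anything strictly below $\omega^{\CtoB(c)}$ stays below $\omega^{\CtoB(c)}$ under addition, so $\omega^{\CtoB(a)} + \CtoB(b) < \omega^{\CtoB(c)}$. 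Finally $\omega^{\CtoB(c)} \leq \omega^{\CtoB(c)} + \CtoB(d)$ gives the result by mixed transitivity \eqref{eq:<-leq-good-condition}. The delicate point requiring care is threading the $\mathsf{isCNF}$ invariant (that the left subtrees are bounded by the head exponent) through the induction so that $\CtoB(b)$ really does land below $\omega^{\CtoB(c)}$; this is where the structural condition on Cantor normal forms is indispensable, and I would likely prove the auxiliary statement ``if $b$ is a CNF with $\fst(b) \leq a$ then $\CtoB(b) < \omega^{\CtoB(a)+\tone}$'' or a similar bound as a separate lemma.
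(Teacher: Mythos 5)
Your proposal takes essentially the same route as the paper's proof: preservation of $<$ by induction on the derivation of $a < b$ in \eqref{eq:tree-inequ}, with the hard head-exponent case $a < c \to \tom{a}{b} < \tom{c}{d}$ resolved via additive principality (\cref{thm:additive-principal}), reflection then obtained from trichotomy of $<$ on $\cnf$ together with preservation, and both $\leq$ statements as easy consequences. The auxiliary bounding lemma you propose, threading the $\mathsf{isCNF}$ invariant $\fst(b) \leq a$ to get $\CtoB(b) < \omega^{\CtoB(c)}$, is precisely the detail the paper delegates to its formalisation, and it is correct --- the only blemish is the stray ``$\CtoB(a) \dissum \Unit$'', which uses the $\bookord$ successor where you mean $\bsuc(\CtoB(a)) \leq \CtoB(c)$, i.e.\ the definition of $<$ on $\brouwer$.
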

\begin{proof}
  We show the proof for $<$; each direction of the statement for $\leq$ is a simple consequence.

  ($\Rightarrow$)~By induction on $a<b$. The case when $\tom a b < \tom c d$ because $a < c$ uses \cref{thm:additive-principal}.

  ($\Leftarrow$)~Assume $\CtoB(a)<\CtoB(b)$. If $a \geq b$, then $\CtoB(a) \geq \CtoB(b)$ by ($\Rightarrow$), conflicting the assumption. Hence $a<b$ by the trichotomy of $<$ on $\cnf$.
\end{proof}
\addtocounter{theorem}{-1}
\endgroup

\begingroup
\def\thecorollary{\ref{cor:CtoB-injective}}
\begin{corollary}
	The function $\CtoB$ is injective.
\end{corollary}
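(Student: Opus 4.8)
The plan is to derive injectivity directly from the preceding theorem, which shows that $\CtoB$ reflects $\leq$, combined with antisymmetry of $\leq$ on $\cnf$ (established in \cref{lem:cnf-everything-decidable}). There is no genuine obstacle here: the statement is a short formal consequence of results already in hand, which is presumably why it is phrased as a corollary rather than a standalone theorem.

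Concretely, I would argue as follows. Suppose $\CtoB(a) = \CtoB(b)$ for CNFs $a, b$. Since $\leq$ is reflexive on $\brouwer$, this equality yields both $\CtoB(a) \leq \CtoB(b)$ and $\CtoB(b) \leq \CtoB(a)$. Applying the reflection half of the preceding theorem (the $\Leftarrow$ direction of $a \leq b \iff \CtoB(a) \leq \CtoB(b)$) to each inequality gives $a \leq b$ and $b \leq a$ in $\cnf$. By antisymmetry of $\leq$ on $\cnf$, we conclude $a = b$, which is precisely injectivity of $\CtoB$. The only subtlety worth a remark is that antisymmetry is really needed on the target side of the reflected inequalities, i.e.\ on $\cnf$ itself, rather than on $\brouwer$; but this is exactly what \cref{lem:cnf-everything-decidable} supplies, and since $\cnf$ is a set the resulting identity is well-behaved.
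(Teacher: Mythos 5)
Your proposal is correct and matches the paper's intended argument exactly: the paper's one-line justification ``By reflecting $\leq$ and antisymmetry, we have the following'' is precisely your route of turning $\CtoB(a)=\CtoB(b)$ into two reflected inequalities $a\leq b$ and $b\leq a$ and concluding by antisymmetry of $\leq$ on $\cnf$ (\cref{lem:cnf-everything-decidable}). Nothing further is needed.
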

\begin{proof}
	$\CtoB(a) = \CtoB(b)$ implies $\CtoB(a) \leq \CtoB(b)$ and thus, by \cref{thm:CtoB-reflects}, $a \leq b$.
	Analogously, one has $b \leq a$. Antisymmetry gives $a = b$.
\end{proof}
\addtocounter{theorem}{-1}
\endgroup

\begingroup
\def\thetheorem{\ref{lem:f-arith}}
\begin{theorem}
	$\CtoB$ commutes with addition, multiplication, and exponentiation with base $\omega$.
\end{theorem}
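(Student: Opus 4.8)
The plan is to prove the three equations $\CtoB(a+b) = \CtoB(a) + \CtoB(b)$, $\CtoB(a \cdot b) = \CtoB(a) \cdot \CtoB(b)$, and $\CtoB(\upomega^a) = \omega^{\CtoB(a)}$ separately, in increasing order of difficulty. Throughout I would freely use the standard algebraic laws of $\brouwer$-arithmetic (established in the formalisation): associativity of $+$ and of $\cdot$, left-distributivity $x \cdot (y+z) = x\cdot y + x \cdot z$, non-strict monotonicity of $\cdot$ in its left argument, the unit laws $\bzero + x = x$, $\bzero \cdot x = \bzero$, $x \cdot \bsuc\,\bzero = x$, and the exponent law $\omega^{\alpha + \beta} = \omega^{\alpha}\cdot\omega^{\beta}$, together with the preceding theorem that $\CtoB$ preserves and reflects $<$ and $\leq$. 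The exponentiation case is immediate: since $\upomega^a \equiv \tom{a}{\tz}$ on $\cnf$, unfolding gives $\CtoB(\upomega^a) = \omega^{\CtoB(a)} + \CtoB(\tz) = \omega^{\CtoB(a)} + \bzero = \omega^{\CtoB(a)}$ by the right-unit law.

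For addition I would induct on the structure of the first argument, with the second universally quantified, following the defining clauses of $\cnf$-addition. The clauses $\tz + b$ and $a + \tz$ reduce to the unit laws for $\bzero$. In the clause $(\tom{a_1}{a_2}) + (\tom{b_1}{d})$, the subcase $a_1 \geq b_1$ gives $\tom{a_1}{(a_2 + (\tom{b_1}{d}))}$, which closes by unfolding $\CtoB$, applying the induction hypothesis to $a_2$, and one step of associativity. The interesting subcase is $a_1 < b_1$, where $\cnf$-addition discards the left summand. Here $a_1 < b_1$ forces $\tom{a_1}{a_2} < \upomega^{b_1}$ in $\cnf$, so by reflection of $<$ we obtain $\CtoB(a) < \omega^{\CtoB(b_1)}$; the first part of \cref{thm:additive-principal} then yields $\CtoB(a) + \omega^{\CtoB(b_1)} = \omega^{\CtoB(b_1)}$, and one associativity step gives $\CtoB(a) + \CtoB(b) = \CtoB(b)$, matching $\CtoB(a+b)$.

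For multiplication I would induct on the second argument, reusing the addition result just proved. The zero clauses and the clause $a \cdot (\tom{\tz}{d}) = a + a\cdot d$ fall out from $\CtoB(\tz) = \bzero$, left-distributivity, $x \cdot \bsuc\,\bzero = x$, the induction hypothesis, and $\omega^{\CtoB(\tz)} = \omega^{\bzero} = \bsuc\,\bzero$. The crux is the clause $(\tom{a_1}{a_2}) \cdot (\tom{b_1}{d}) = \upomega^{a_1 + b_1} + (\tom{a_1}{a_2})\cdot d$ for $b_1 \neq \tz$. After unfolding $\CtoB$, using the addition result, the induction hypothesis on $d$, and left-distributivity on the other side, the whole case collapses to the single identity $\CtoB(a) \cdot \omega^{\CtoB(b_1)} = \omega^{\CtoB(a_1) + \CtoB(b_1)}$, and this is where the difficulty and \cref{thm:omega-preserves-finite} enter. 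I would prove it by sandwiching. From $a = \tom{a_1}{a_2} < \upomega^{a_1 + \tone}$ in $\cnf$ and reflection we get $\CtoB(a) < \omega^{\CtoB(a_1)+\bsuc\,\bzero} = \omega^{\CtoB(a_1)}\cdot\omega$, so by \cref{lem:x-under-limit-means-x-under-element} there is a finite $n$ with $\CtoB(a) \leq \omega^{\CtoB(a_1)}\cdot\iota(n)$. Left-monotonicity of $\cdot$ bounds $\CtoB(a)\cdot\omega^{\CtoB(b_1)}$ between $\omega^{\CtoB(a_1)}\cdot\omega^{\CtoB(b_1)}$ and $(\omega^{\CtoB(a_1)}\cdot\iota(n))\cdot\omega^{\CtoB(b_1)}$; associativity rewrites the upper bound as $\omega^{\CtoB(a_1)}\cdot(\iota(n)\cdot\omega^{\CtoB(b_1)})$, and since $b_1 \neq \tz$ gives $\CtoB(b_1) > \bzero$, \cref{thm:omega-preserves-finite} collapses $\iota(n)\cdot\omega^{\CtoB(b_1)}$ to $\omega^{\CtoB(b_1)}$, so the two bounds coincide and antisymmetry forces equality; the exponent law then turns $\omega^{\CtoB(a_1)}\cdot\omega^{\CtoB(b_1)}$ into $\omega^{\CtoB(a_1)+\CtoB(b_1)}$.

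The main obstacle is precisely this leading-term identity in the multiplication step: extracting the bound $\CtoB(a) < \omega^{\CtoB(a_1)}\cdot\omega$ from the $\cnf$ well-formedness condition and then using \cref{thm:omega-preserves-finite} to absorb the finite factor $\iota(n)$. Everything else is routine unfolding combined with the algebraic laws of $\brouwer$-arithmetic and the already-established preservation and reflection of the order by $\CtoB$.
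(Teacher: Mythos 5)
Your proof is correct and takes essentially the same route as the paper, which defers the details to its formalisation but flags exactly the ingredients you use: induction along the defining equations of $\cnf$-arithmetic, with \cref{thm:additive-principal} absorbing the discarded summand in the addition case and \cref{thm:omega-preserves-finite} absorbing the finite factor $\iota(n)$ at the crux of the multiplication case. One cosmetic slip: the two steps you attribute to ``reflection'' of $<$ (deriving $\CtoB(a) < \omega^{\CtoB(b_1)}$ from $a < \upomega^{b_1}$, and likewise with exponent $a_1 + \tone$) are in fact uses of \emph{preservation}, i.e.\ the forward direction of the preceding theorem.
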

\begin{proof}
As an example, we show that $\CtoB$ commutes with addition, i.e., $\CtoB(a+b) = \CtoB(a) + \CtoB(b)$ for all $a,b : \cnf$. The proof is carried out by induction on $a,b$. It is trivial when either of them is $\tz$. Assume $a = \tom x u$ and $b = \tom y v$. If $x<y$, then $a+b=b$. We have also $\upomega^x < \upomega^y$, which implies $\omega^{\CtoB(x)} < \omega^{\CtoB(y)}$ by Theorem~\ref{thm:CtoB-reflects}. Then by Lemma~\ref{thm:additive-principal}.\ref{item:add-prin-1} we have $\omega^{\CtoB(x)} + \omega^{\CtoB(y)} = \omega^{\CtoB(y)}$. By the same argument, from the fact $u < \upomega^y$ we derive $\CtoB(u) + \omega^{\CtoB(y)} = \omega^{\CtoB(y)}$. Therefore, both $\CtoB(a+b)$ and $\CtoB(a) + \CtoB(b)$ are equal to $\CtoB(b)$. If $y \leq x$, then $a+b = \tom x {u + b}$ by definition. By induction hypothesis, we have $\CtoB(u + b) = \CtoB(u) + \CtoB(b)$. Therefore, both $\CtoB(a+b)$ and $\CtoB(a) + \CtoB(b)$ are equal to $\omega^{\CtoB(x)} + \CtoB(u) + \CtoB(b)$.

The fact that $\CtoB$ commutes with multiplication can be proved with a similar argument of induction. Moreover, the proof relies on \cref{thm:omega-preserves-finite}.
Preservation of exponentiation with base $\omega$ holds by definition.
%
%	\todo[inline]{Should we give a sketch? Fred?\\
%		The main body of the paper says:
%	We note that $\CtoB$ also preserves all arithmetic operations
%	on $\cnf$. For multiplication, this relies on $\iota(n) \cdot \omega^x = \omega^x$ for $\brouwer$, where $\iota : \N \to \brouwer$ embeds the natural numbers as Brouwer trees, and  $\omega \defeq \blimit\, \iota$ --- see our formalisation for details.
%}
\end{proof}
\addtocounter{theorem}{-1}
\endgroup

\begingroup
\def\thetheorem{\ref{thm:cnf-below-eps0}}
\begin{theorem}
	For all $a : \cnf$, we have $\CtoB(a) < \blimit\,(\lambda k.\omega \uparrow\uparrow k)$, where $\omega \uparrow\uparrow 0 \defeq \omega$ and $\omega \uparrow\uparrow (k+1) \defeq \omega^{\omega \uparrow\uparrow k}$.
\end{theorem}
\begin{proof}
  By induction on $a$. Using that $\varepsilon_0 = \omega^{\varepsilon_0} = \omega^{\omega^{\varepsilon_0}}$, in the step case we have $\omega^{\CtoB(a)} + \CtoB(b) < \varepsilon_0$ by \cref{thm:additive-principal}, strict monotonicity of $\omega^-$, and the induction hypothesis.
\end{proof}
\addtocounter{theorem}{-1}
\endgroup

\begin{lemma} \label{lem:projection-simulation}
	For $X : \bookord$ with $x : X$, the first projection $\fstproj : X_{\slash x} \to X$ is a simulation.
	If $x, y : X$ and $f : X_{\slash x} \to X_{\slash y}$ is a function, then $f$ is a simulation if and only if $\fstproj \circ f = \fstproj$.
\end{lemma}
\begin{proof}
	Both properties required in the definition of a simulation are obvious in the case of $\fstproj$.
	In the second sentence, if $f$ is a simulation, then the equality follows from the uniqueness of simulations \cite[Thm~10.3.16]{hott-book}.
	If the equality holds then, again, the two properties in the definition of a simulation are clear for $f$.
\end{proof}

\begingroup
\def\thelemma{\ref{lem:BtoO-injective}}
\begin{lemma}
	The function $\BtoO : \brouwer \to \bookord$ is injective, and preserves  $<$ and $\leq$.
\end{lemma}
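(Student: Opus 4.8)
The plan is to follow the hint already given in the text: an equality $\BtoO(a) = \BtoO(b)$ in $\bookord$ should be pulled back, along the two canonical first-projection simulations into $\brouwer$, to a statement comparing the predecessors of $a$ and of $b$, which extensionality then converts into $a = b$. Since both $\brouwer$ and $\bookord$ are sets, being an embedding coincides with being injective, so it suffices to prove $\BtoO(a) = \BtoO(b) \to a = b$.

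First I would unfold $\BtoO(a) = \brouwer_{\slash a}$ and $\BtoO(b) = \brouwer_{\slash b}$ and recall that equality in $\bookord$ coincides with order isomorphism by \cite[Theorem 10.3.10]{hott-book}. Hence an equality $\BtoO(a) = \BtoO(b)$ yields an equivalence $\phi : \brouwer_{\slash a} \simeq \brouwer_{\slash b}$ which is a simulation in both directions. Next I would record that the first projections $\pi_a : \brouwer_{\slash a} \to \brouwer$ and $\pi_b : \brouwer_{\slash b} \to \brouwer$ are simulations --- this is precisely the statement $\BtoO(a) \leq \brouwer$ noted in the text, holding because the first projection of an initial segment is always a simulation.

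The key step is then to compare the two simulations $\pi_a$ and $\pi_b \circ \phi$, both of type $\brouwer_{\slash a} \to \brouwer$ (the latter being a composite of simulations). Since simulations are unique, i.e.\ the type $X \leq Y$ is a proposition by \cite[Lemma 10.3.16]{hott-book}, these two maps agree: $\pi_a = \pi_b \circ \phi$. Evaluating at $(y,p) : \brouwer_{\slash a}$ gives $y = \fstproj(\phi(y,p))$, so $\phi$ preserves the underlying Brouwer tree; and since $\phi(y,p)$ lies in $\brouwer_{\slash b}$, its witness shows $y < b$. Therefore $y < a$ implies $y < b$, and the symmetric argument using $\phi^{-1}$ (equivalently $\pi_b = \pi_a \circ \phi^{-1}$) gives the converse, so $\forall c.\ (c < a) \leftrightarrow (c < b)$. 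Extensionality of $<$ on $\brouwer$ (\cref{thm:brouwer-extensional}) now yields $a = b$.

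The main obstacle is this middle step: one must correctly extract the order isomorphism from the $\bookord$-equality and realise that comparing it against the canonical projection simulations is exactly the place where uniqueness of simulations does the work of forcing $\phi$ to preserve underlying elements. The remaining ingredients --- that first projections are simulations, that simulations compose, and the closing appeal to extensionality --- are routine once this identification is in place.
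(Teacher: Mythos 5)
Your proof is correct and follows essentially the same route as the paper, which sketches exactly this argument: identify $\BtoO(a)$ with $\brouwer_{\slash a}$, use that the first projections are simulations $\BtoO(a) \leq \brouwer$, and close with extensionality of $<$ on $\brouwer$ (\cref{thm:brouwer-extensional}). Your middle step --- invoking propositionality of $X \leq Y$ to force $\pi_a = \pi_b \circ \phi$, so that the isomorphism obtained from the $\bookord$-equality preserves underlying trees --- is precisely the detail the paper leaves to its formalisation.
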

\begin{proof}
	The first part (injectivity of $\BtoO$) is a special case of the following statement: Given $X : \bookord$, the map $X \to \bookord$, $x \mapsto X_{\slash x}$ is injective. 
	This is remarked just before \cite[before 10.3.19]{hott-book}. We give a detailed proof:
	
	Note that an equality $Y = Z$ in $\bookord$ gives rise to a canonical simulation $X \leq Y$ by path induction.
	Now, assume $x,y : X$ with $X_{\slash x} = X_{\slash y}$.
	We get $f : X_{\slash x} \leq X_{\slash y}$.
	By \cref{lem:projection-simulation}, $f$ maps $(z,p)$ to $(z,q)$, with $q : z < y$; that is, every element below $x$ is also below $y$.
	The symmetric statement follows by the symmetric argument, and injectivity of $x \mapsto X_{\slash x}$ by extensionality.
	
	If $q : x \leq y$ in $\brouwer$, then the map $X_{\slash x} \to X_{\slash y}$, $(z,p) \to (z, p \cdot q)$ is a simulation by \cref{lem:projection-simulation}, thus $\BtoO$ preserves $\leq$.
	This implies that $<$ is preserved as well since $X < Y \leftrightarrow (X \uplus \mathsf 1) \leq Y$.
\end{proof}
\addtocounter{theorem}{-1}
\endgroup

\begingroup
\def\thetheorem{\ref{thm:lem-implies-simulation}}
\begin{theorem}
    Under the assumption of the law of excluded middle, the function $\BtoO : \brouwer \to \bookord$ is a simulation.
\end{theorem}
\begin{proof}
  Given $b < \BtoO(a)$, we need to find a Brouwer tree $a' < a$ such that $\BtoO(a') = b$. Using $\LEM$, we can choose $a'$ to be the minimal Brouwer tree $x$ such that $b \leq \BtoO(x)$.
\end{proof}
\addtocounter{theorem}{-1}
\endgroup

\begingroup
\def\thetheorem{\ref{thm:BtoO-sim-WLPO}}
\begin{theorem}
    If the map $\BtoO : \brouwer \to \bookord$ is a simulation, then WLPO holds.
\end{theorem}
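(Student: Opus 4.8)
The plan is to exploit the asymmetry that $\brouwer$ has a \emph{decidable} zero predicate (\cref{subsec:distincton-of-constructors}), whereas $\bookord$ does not — being able to detect zeroes there would already yield classification, hence $\LEM$, by \cref{thm:ord-everything-undecidable}. A simulation $\BtoO$ realises $\brouwer$ as an initial segment of $\bookord$, so it should let us compute a supremum in $\bookord$ (which, unlike $\brouwer$, has suprema of weakly increasing sequences) and then transport the result back to $\brouwer$, where its zero-ness becomes decidable. I would set things up so that this zero-ness is exactly the answer to a WLPO instance.

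Concretely, fix $s : \N \to \Bool$. First I would build a weakly increasing sequence $t : \N \to \brouwer$ by $t(n) \defeq \bzero$ if $s(m) = \bff$ for all $m \leq n$, and $t(n) \defeq \bsuc\,\bzero$ otherwise. This is a legitimate definition, since the guard is a decidable finite conjunction, and $t$ is weakly increasing because once it jumps to $\bsuc\,\bzero$ it stays there. Applying $\BtoO$ pointwise gives a weakly increasing sequence $\BtoO \circ t : \N \to \bookord$, and by the construction of limits in \cref{thm:bookord-zersuclim} — which, by the remark following it, only needs weak increasingness — I can form its supremum $L \defeq \osup{(\BtoO \circ t)}$. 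By construction $\iszero(L)$ holds precisely when $t$ is constantly $\bzero$, i.e.\ precisely when $\forall n.\,s(n) = \bff$; moreover each $\BtoO(t(n))$ lies below $\BtoO(\bsuc\,\bzero)$, so $L \leq \BtoO(\bsuc\,\bzero)$.

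Next I would use the hypothesis that $\BtoO$ is a simulation to pull $L$ back into $\brouwer$: the image of a simulation is downward closed, so from $L$ lying below $\BtoO(\bsuc\,\bzero)$ I would extract a Brouwer tree $b$ with $\BtoO(b) = L$. Since is-zero is decidable on $\brouwer$ and $\BtoO$ is injective, deciding $\iszero(b)$ decides $\iszero(L)$, and hence decides whether $\forall n.\,s(n) = \bff$ — which is exactly WLPO.

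The step I expect to be the main obstacle, and where the hypothesis must do all the work, is this pullback: turning the \emph{non-strict} bound $L \leq \BtoO(\bsuc\,\bzero)$ into an honest representative $b$ with $\BtoO(b) = L$. Note that one cannot simply pass from $L \leq \BtoO(\bsuc\,\bzero)$ together with $\BtoO(\bsuc\,\bzero) < \BtoO(\bsuc\bsuc\,\bzero)$ to a strict bound $L < \BtoO(\bsuc\bsuc\,\bzero)$ that would let one invoke the back-condition directly: this is exactly the direction \eqref{eq:leq-<-bad-direction} that fails without $\LEM$. Indeed, since the sequence $t$ can only be taken weakly increasing (a strictly increasing bounded sequence would have a fixed limit, carrying no information about $s$), the supremum $L$ is provably only $\leq$-bounded, never strictly, so representability of $L$ is genuinely extra information. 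The crux of the argument is therefore to make the simulation assumption supply this representative \emph{uniformly}, without circularly presupposing a decision about $s$; it is precisely this transfer that makes ``$\BtoO$ is a simulation'' strong enough to entail WLPO, mirroring the fact that its converse (\cref{thm:lem-implies-simulation}) needed $\LEM$.
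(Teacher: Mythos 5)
Your setup is sound and, up to reformulation, matches the paper's: your ordinal $L$ is exactly the paper's $\bigsqcup S$ --- both are the proposition $\exists n.\, s\,n = \btt$ regarded as an ordinal with the empty order (your weakly increasing sequence $t$ and the paper's family $S\,n \defeq (s\,n = \btt)$ have the same supremum) --- and your concluding step, deciding the constructor of a Brouwer tree lying below $\bsuc(\bsuc\,\bzero)$, is also the paper's. But the middle of your argument, which you yourself flag as ``the main obstacle'', is a genuine gap, and it sits precisely at the one step where the hypothesis must do its work. The back-condition of a simulation fires only at \emph{strict} inequalities: from $Y \prec \BtoO(x)$ one obtains $x_0 < x$ with $\BtoO(x_0) = Y$; from the non-strict bound $L \leq \BtoO(\bsuc\,\bzero)$ it yields nothing at all. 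You correctly note that you cannot upgrade to strictness by composing with $\BtoO(\bsuc\,\bzero) < \BtoO(\bsuc(\bsuc\,\bzero))$, since that is the direction \eqref{eq:leq-<-bad-direction} which fails constructively --- but then, instead of producing a strict bound by some other route, you simply posit that the simulation assumption will ``supply this representative uniformly'', with no mechanism whatsoever. An argument whose sole appeal to the hypothesis is an unspecified hope is not a proof.

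The paper resolves this at exactly the point you declared impossible: it asserts the strict inequality $\bigsqcup S < \Bool$, where $\Bool \equiv \BtoO(\bsuc(\bsuc\,\bzero))$, as a direct claim about the supremum of the family $S$ against the two-element ordinal --- not by factoring through $L \leq \BtoO(\bsuc\,\bzero)$ followed by the forbidden direction. With that strict bound in hand, the back-condition applies verbatim and produces $b < \bsuc(\bsuc\,\bzero)$ with $\BtoO\,b = \bigsqcup S$, after which the decidability argument (yours and the paper's alike) finishes: $b = \bzero$ forces $\bigsqcup S = \Empty$, i.e.\ $s$ constantly $\bff$, and $b = \bsuc\,\bzero$ forces $\bigsqcup S = \Unit$, refuting it. So your assertion that ``the supremum $L$ is provably only $\leq$-bounded, never strictly'' is in direct tension with the proof you were trying to reconstruct; whatever one makes of that tension, your own attempt never establishes a strict $\prec$-bound, hence never legitimately invokes the back-condition, never obtains $b$, and therefore does not prove the theorem. (Your opening paragraph, on detecting zeroes in $\bookord$ yielding classification and hence $\LEM$, is a red herring: it plays no role in the rest of your argument, and the paper does not use any such reduction.)
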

\begin{proof}
    Let a sequence $s$ be given.
    We define the type family $S : \N \to \Prop$ by $S n \defeq (s \, n = \mathsf{tt})$ and regard it as a family of orders.
    We have $\bigsqcup S < \Bool$.
    Observing that $\Bool \equiv \BtoO(\bsuc (\bsuc \, \bzero))$ and using the assumption that $\BtoO$ is a simulation, we get $b < \bsuc (\bsuc \, \bzero)$ such that $\BtoO \, b = \bigsqcup S$.
    It is decidable whether $b$ is $\bzero$ or $\bsuc \, \bzero$, and this determines if $s$ is constantly $\bff$ or not.
\end{proof}
\addtocounter{theorem}{-1}
\endgroup

\end{document}